\author{%
  Praneeth Kacham\thanks{Work done while the author was a student at Carnegie Mellon University.} \\
  Google Research\\
  \texttt{pkacham@google.com} \\
  \and
  David P. Woodruff\\
  Carnegie Mellon University\\
  \texttt{dwoodruf@cs.cmu.edu}
}
\DeclareMathOperator*{\argmax}{arg\,max}
\title{Approximating the Top Eigenvector in Random Order Streams}
\date{}
\begin{document}
\maketitle
\begin{abstract}
    When rows of an $n \times d$ matrix $A$ are given in a stream, we study algorithms for approximating the top eigenvector of the matrix $\T{A}A$ (equivalently, the top right singular vector of $A$). We consider worst case inputs $A$ but assume that the rows are presented to the streaming algorithm in a uniformly random order. We show that when the gap parameter $R = \sigma_1(A)^2/\sigma_2(A)^2 = \Omega(1)$, then there is a randomized algorithm that uses $O(h \cdot d \cdot \polylog(d))$ bits of space and outputs a unit vector $v$ that has a correlation $1 - O(1/\sqrt{R})$ with the top eigenvector $v_1$. Here $h$ denotes the number of \emph{heavy rows} in the matrix, defined as the rows with Euclidean norm at least $\frnorm{A}/\sqrt{d \cdot \polylog(d)}$. We also provide a lower bound showing that any algorithm using $O(hd/R)$ bits of space can obtain at most $1 - \Omega(1/R^2)$ correlation with the top eigenvector. Thus, parameterizing the space complexity in terms of the number of heavy rows is necessary for high accuracy solutions.

    Our results improve upon the $R = \Omega(\log n \cdot \log d)$ requirement  in a recent work of Price and Xun (FOCS 2024). We note that the  algorithm of Price and Xun works for arbitrary order streams whereas our algorithm requires a stronger assumption that the rows are presented in a uniformly random order. We additionally show that the gap requirements in their analysis can be brought down to $R = \Omega(\log^2 d)$ for arbitrary order streams and $R = \Omega(\log d)$ for random order streams. The requirement of $R = \Omega(\log d)$ for random order streams is nearly tight for their analysis as we obtain a simple instance with $R = \Omega(\log d/\log\log d)$ for which their algorithm, with any fixed learning rate, cannot output a vector approximating the top eigenvector $v_1$.
\end{abstract}
\section{Introduction}
We consider the problem of approximating the top eigenvector in the streaming setting. In this problem, we are given vectors $a_1, \ldots, a_n \in \R^d$ one at a time in a stream. Let $A$ be an $n \times d$ matrix with rows $a_1, \ldots, a_n$. The task is to approximate the top eigenvector of the matrix $\T{A}A$. Throughout the paper, we use $v_1 \in \R^d$ to denote the top eigenvector of $\T{A}A$. We focus on obtaining streaming algorithms that use a small amount of space and can output a unit vector $\hat{v}$ such that $\la \hat{v}, v_1\ra^2 \ge 1 - f(R)$, where $f(R)$ is a decreasing function in the gap $R = \lambda_1(\T{A}A)/\lambda_2(\T{A}A)$. Here $\lambda_1(\cdot), \lambda_2(\cdot)$ denote the two largest eigenvalues. As the gap $R$ becomes larger, the eigenvector approximation problem becomes \emph{easier} and we therefore want more accurate approximations to the eigenvector $v_1$.

If one is allowed to use $\tilde{O}(d^2)$\footnote{The notation $\tilde{O}(f(n))$ is used to denote the set of functions in $O(f(n) \cdot \polylog(n))$.} bits of space, we can maintain the matrix $\T{A}A = \sum_i a_i\T{a_i}$ as we see the rows $a_i$ in the stream, and at the end of processing the stream, we can compute the exact top eigenvector $v_1$. When the dimension $d$ is large, the requirement of $\Omega(d^2)$ bits of memory can be impractical (see e.g., applications that require a large value of $d$ in \cite{mitliagkas2013memory}.) Hence, an interesting question is to study non-trivial streaming algorithms that use less memory. In this work, we focus on obtaining algorithms that use $\tilde{O}(d)$ bits of space.

In the offline setting (where the entire matrix $A$ is available to us), fast iterative algorithms such as \citet{gu2015subspace, musco2015randomized, musco2018stability} can be used to quickly obtain accurate approximations to the top eigenvector when the gap $R = \Omega(1)$. In a single pass streaming setting, we cannot run these algorithms as these iterative algorithms need to \emph{see} the entire matrix multiple times.

There have been two major lines of work studying the problem of eigenvector approximation and the related Principal Component Analysis (PCA) problem in the streaming setting with near-linear in $d$ memory. In the first line of work, each row encountered in the stream is assumed to be sampled independently from an unknown distribution with mean $0$ and covariance $\Sigma$ and the task is to approximate the top eigenvector of $\Sigma$ using the samples. In this line of work, the sample complexity required for algorithms using $O(d \cdot \polylog(d))$ bits of space to output an approximation to ${v}_1$, is the main question. The algorithms are usually a variant of Oja's algorithm \citep{oja1982simplified, jain2016streaming, allen2017first, huang2021streaming, kumar2024streaming} or the block power method \citep{hardt2014noisy, balcan2016improved}. We note that \citet{kumar2024streaming} relax the i.i.d. assumption and analyze the sample complexity of Oja's algorithm for estimating the top eigenvector in the Markovian data setting.

The other line of work studies algorithms for arbitrary streams appearing in an arbitrary order. In this setting, we want algorithms to work for \emph{any} input stream given in \emph{any} order. A problem closely related to the eigenvector estimation problem is the Frobenius-norm Low Rank Approximation \citep{clarkson2017low, boutsidis2016optimal, upadhyay2016fast, ghashami2016frequent}. The deterministic Frequent Directions sketch of \citet{ghashami2016frequent} can, using $\tilde{O}(d/\varepsilon)$ bits of space, output a unit vector $u$ such that
\begin{align*}
    \frnorm{A(I-u\T{u})}^2 \le (1 + \varepsilon)\frnorm{A(I-v_1\T{v_1})}^2.
\end{align*}
Although the vector $u$ is a $1+\varepsilon$ approximate solution to the Frobenius norm Low Rank Approximation problem, it is possible that the vector $u$ may be (nearly) orthogonal to the top eigenvector $v_1$. Hence the Frequent Directions sketch does not guarantee top eigenvector approximation. Recently, \citet{price2023spectral} study the eigenvector approximation problem in arbitrary streams and obtain results in terms of the gap $R$ of the instance. Price and Xun prove that when $R = \Omega(\log n \cdot \log d)$, a variant of Oja's algorithm outputs a unit vector $\hat{v}$ such that 
\begin{align*}
    \la \hat{v}, v_1\ra^2 \ge 1 - \frac{C\log d}{R} - \frac{1}{\poly(d)}
\end{align*}
where $C$ is a large enough universal constant. On the lower bound side, Price and Xun showed that any algorithm that outputs a vector $\hat{v}$ satisfying
\begin{align*}
    \la \hat{v}, v_1\ra^2 \ge 1 - \frac{1}{CR^2},
\end{align*}
must use $\Omega(d^2/R^3)$ bits of space while processing the stream. This lower bound shows that in the important case of $R = O(1)$, the \emph{correlation}\footnote{We say that the value $\la u, v\ra^2$ denotes the correlation between unit vectors $u$ and $v$.} that can be obtained by an algorithm using $\tilde{O}(d)$ bits of space is at most a constant less than $1$. Thus, the current best algorithms for arbitrary streams work only when $R = \Omega(\log n \cdot \log d)$ and for the important case of $R = O(1)$, there are no existing algorithms requiring significantly fewer than $d^2$ bits of memory. They also give a lower bound on the size of \emph{mergeable} summaries for approximating the top eigenvector.

We identify an instance with $R = \Theta(\log d/\log\log d)$ where the algorithm of Price and Xun fails to produce a vector with even a constant correlation with the vector $v_1$. This shows that their algorithm or other variants of Oja's algorithm may fail to extend to the case when $R = O(1)$. We further show that the algorithm of Price and Xun fails to produce such a vector even when the rows in our hard instance are ordered uniformly at random, showing that even randomly ordered  streams can be hard to solve for variants of Oja's algorithm.

In this work, we focus on algorithms that work on worst case inputs $A$ while assuming that the rows of $A$ are \emph{uniformly randomly ordered}. This model is mid-way between the i.i.d. setting and the arbitrary order stream setting in terms of the generality of streams that can be modeled. We note that a number of works \citep{munro1980selection, guha2005streaming, chakrabarti2008robust, guha2009stream, assadi2023noisy} have previously considered streaming algorithms and lower bounds for worst case inputs with random order streams, as it is a natural model often arising in practical settings.  See \cite{GuptaSingla2021} for a gentle introduction to the random-order model. Our algorithms are parameterized in terms of the number of \textbf{heavy} rows in the stream. We define a row $a_i$ to be \emph{heavy} if $\opnorm{a_i} \ge \frnorm{A}/\sqrt{d \cdot \polylog(d)}$. Note that in any stream of rows, by definition, there are at most $d \cdot \polylog(d)$ heavy rows. We state our theorem informally below:
\begin{theorem}
    Let $a_1, \ldots, a_n \in \R^d$ be a randomly ordered stream and let $A$ denote the $n \times d$ matrix with rows given by $a_1, \ldots, a_n$. If $R = \lambda_1(\T{A}A)/\lambda_2(\T{A}A) > C$ for a large enough constant $C$ and the number of heavy rows in the stream is at most $h$, then there is a streaming algorithm using $O(h \cdot d \cdot \polylog(d))$ bits of space and outputting a unit vector $\hat{v}$ satisfying
    \begin{align*}
        \la\hat{v}, v_1\ra^2 \ge 1 - O(1/\sqrt{R})
    \end{align*}
    with a probability $\ge 4/5$.
\end{theorem}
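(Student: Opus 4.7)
The plan is to separate the stream into \emph{heavy} rows (at most $h$, stored verbatim) and \emph{light} rows (each of norm at most $\|A\|_F/\sqrt{d \cdot \polylog d}$), process the two types differently, and combine the outputs offline. Since $\|A\|_F^2$ is not known in advance, I would maintain it exactly as a running sum and classify each incoming row $a_i$ as heavy relative to the current prefix's threshold. The uniformly random ordering ensures that the partial estimate of $\|A\|_F^2$ is within a $(1 \pm 1/\polylog d)$ factor of the true value after only a small constant fraction of the stream, so the classification stabilizes quickly and the heavy-row buffer stores only $O(h)$ rows with high probability, at a cost of $O(h \cdot d \cdot \polylog d)$ bits.

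For the light rows I would run a variant of Oja's algorithm with an appropriately tuned learning rate. Under uniformly random ordering, the sequence of light rows is equivalent to a sample without replacement from the set of light rows, enabling a martingale analysis that closely mirrors the i.i.d.\ case. The crucial input to that analysis is the operator-norm bound $\|a_i a_i^{\top}\|_{\mathrm{op}} = \|a_i\|^2 \le \|A\|_F^2/(d \polylog d)$ for every light row, which tightly controls the per-step variance of the multiplicative updates; this is precisely the quantity that forces Price and Xun to require $R = \Omega(\log n \log d)$ when heavy rows are not peeled off. Combined with the gap assumption on $A^{\top}A$, this should give an Oja iterate $\hat v_L$ approximating the top eigenvector of the light covariance $A_L^{\top} A_L$ to correlation $1 - O(1/\sqrt R)$. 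At the end of the stream, I would build an approximation $\widehat M$ of $A^{\top}A$: the heavy contribution $A_H^{\top} A_H$ is computed exactly from the buffer, and the light contribution is approximated from $\hat v_L$ augmented by a small number of auxiliary directions maintained during the stream (for example via a Frequent Directions-style rank-$O(h)$ sketch, which still fits in the space budget). The output $\hat v$ is the top eigenvector of $\widehat M$ computed offline, and a Davis--Kahan perturbation bound then delivers $\langle \hat v, v_1\rangle^2 \ge 1 - O(1/\sqrt R)$.

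The main obstacle I expect is the combination step. Even when $A^{\top}A$ has a large gap, the light covariance $A_L^{\top}A_L$ need not --- for example, if the heavy rows are nearly collinear with $v_1$, removing them can make the light part almost spectrally flat --- so approximating $A_L^{\top}A_L$ by a single rank-$1$ matrix derived from $\hat v_L$ may be insufficient, since the tail of $A_L^{\top}A_L$ can tilt the top eigenvector of $\widehat M$ away from $v_1$. Resolving this will require either the auxiliary sketch mentioned above, or a structural lemma showing that peeling off only $O(h)$ heavy rows from a matrix with gap $R$ cannot inflate $\lambda_2(A_L^{\top}A_L)$ beyond a controlled fraction of $\lambda_1(A^{\top}A)$. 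A secondary subtlety is the online heavy/light classification: because the threshold depends on the running $\|A\|_F^2$, rows classified as light early may exceed the final threshold, and I would need a potential argument leveraging the random ordering to bound the number of such reclassifications and show they do not disrupt the Oja martingale on light rows.
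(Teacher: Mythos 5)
Your high-level split into heavy and light rows is the same as the paper's, but the way you handle the two parts and, crucially, the way you combine them are both genuinely different from the paper, and the combination step as you describe it has a real gap.

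\paragraph{The combination step does not work as proposed.} The paper never attempts to reconstruct an approximation $\widehat M$ of $\T{A}A$ and diagonalize it. Instead it performs a case analysis: either $\opnorm{A_{\text{heavy}} v_1} \ge (1-\beta)\opnorm{A}$, in which case the top right singular vector of $A_{\text{heavy}}$ (stored exactly) is already $1-O(\beta)$ correlated with $v_1$ and is output directly; or $\opnorm{A_{\text{heavy}} v_1} < (1-\beta)\opnorm{A}$, in which case $\opnorm{A_{\text{light}} v_1}^2 \ge \beta\opnorm{A}^2$, which forces $\sigma_1(A_{\text{light}})^2/\sigma_2(A_{\text{light}})^2 \ge R\beta \ge 2$ (choosing $\beta \approx 1/\sqrt{R}$), and the top singular vector of $A_{\text{light}}$ is then shown to be $1-O(1/(R\beta))$ correlated with $v_1$. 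The algorithm simply outputs whichever candidate has larger $\opnorm{A u}$, estimated via a small Gaussian sketch. Your worry that ``the light covariance need not have a gap'' is exactly right, but the resolution is not to try to recombine heavy and light into $\widehat M$; it is that in precisely that regime the heavy part alone carries the answer. The ``structural lemma'' you float — that peeling $O(h)$ heavy rows cannot inflate $\lambda_2(A_L^\top A_L)$ relative to $\lambda_1(\T{A}A)$ — is false (the light part can be spectrally flat), and the Frequent Directions sketch you fall back on cannot save you: a rank-$k$ FD sketch of $A_L$ carries operator error $\Theta(\frnorm{A_L}^2/k)$, which is $\frnorm{A_L}^2/(h\,\polylog d)$ with your space budget and is completely uncontrolled by the gap $R$. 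Davis--Kahan applied to such a $\widehat M$ gives nothing; indeed the paper explicitly notes that Frequent Directions does not guarantee eigenvector approximation.

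\paragraph{The light-row algorithm is also different, and the difference matters.} You propose running Oja's algorithm on the light rows and claim it reaches correlation $1-O(1/\sqrt R)$. The paper's own results cut against this: their improved analysis of the Price--Xun Oja variant in random-order streams still only reaches correlation $1-O(\log d/R)$ and requires $R = \Omega(\log d)$, and the paper exhibits an instance (with no fixed learning rate working) at $R = \Theta(\log d/\log\log d)$ even in random order. There is no known Oja analysis giving $1-O(1/\sqrt R)$ at constant $R$ even under the norm bounds you obtain after peeling. The paper instead uses row-norm sampling (Theorem~\ref{thm:row-norm-sampling}) to make $A_{\text{light}}$ into a stream of rows of \emph{equal} norm, then partitions that stream into $t=\Theta(\log d)$ disjoint blocks $\bB_1,\ldots,\bB_t$, each a spectral $\varepsilon$-approximation of $\T{A_{\text{light}}}A_{\text{light}}$, and applies the chain $(\T{\bB_t}\bB_t)\cdots(\T{\bB_1}\bB_1)\bg$. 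The stable-rank-one property of this matrix product, established via the singular-value inequality of Wang (Theorem~3(ii) in the paper), is what lets the analysis beat the $\log d$ loss inherent to Oja. None of this appears in your plan.

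\paragraph{Minor point.} Your online heavy/light classification via a running $\frnorm{A}^2$ estimate is plausible but unnecessary: the paper sidesteps it by guessing $\frnorm{A}^2$ and $\opnorm{A_{\text{light}}}^2$ in $O(\log(nd))$ powers of two, running each guess in parallel, and selecting the best output at the end with a Gaussian sketch of $A$.
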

Our algorithm is a variant of the block power method. Along the way, we also improve the gap requirements in the results of \citet{price2023spectral}. We show that by subsampling a stream of rows, the algorithm of Price and Xun can be made to work even when the gap $R$ is $\Omega(\log^2 d)$ in arbitrary order streams, improving on the $\Omega(\log n \cdot \log d)$ requirement in their analysis. We also show that in random order streams, a gap of $\Omega(\log d)$ is sufficient for their algorithm, though our algorithm improves on this and works for even a constant gap. 

Similar to the lower bound of Price and Xun, we show that any algorithm for random order streams must use $\Omega(h \cdot d / R)$ bits of space to output a vector $\hat{v}$ satisfying $\la \hat{v}, v_1\ra^2 \ge 1 - 1/CR^2$ where $C$ is a constant. We summarize the theorem below.
\begin{theorem}
    Consider an arbitrary random order stream $a_1, \ldots, a_n$ with the gap parameter $\frac{\sigma_1(A)^2}{\sigma_2(A)^2} = R$. Let $h$ be the number of \emph{heavy} rows in the stream. Any streaming algorithm that outputs a unit vector $\hat{v}$ such that 
    \begin{align*}
        \la \hat{v}, v_1\ra^2 \ge 1 - 1/CR^2
    \end{align*}
    for a large enough constant $C$, with a probability $\ge 1 - (1/2)^{R+1}$ over the ordering of the stream and its internal randomness, must use $\Omega(h \cdot d / R)$ bits of space.
    \label{thm:informal-lowerbound}
\end{theorem}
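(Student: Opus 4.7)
The plan is to prove the lower bound by reducing from the Augmented Indexing communication problem on $N = \Theta(hd/R)$ bits, whose one-way randomized communication complexity is $\Omega(N)$. In the style of the $\Omega(d^2/R^3)$ bound of Price and Xun, I would construct a hard distribution over random-order streams with exactly $h$ heavy rows and spectral gap $R$, such that any streaming algorithm using $s$ bits of memory and outputting $\hat v$ with $\la \hat v, v_1\ra^2 \ge 1 - 1/(CR^2)$ (with probability $\ge 1-(1/2)^{R+1}$ over ordering and internal coins) yields an $s$-bit one-way protocol for Augmented Indexing, forcing $s = \Omega(hd/R)$.

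For the instance construction, I would partition Alice's Augmented Indexing string $x \in \{-1,+1\}^N$ into $h$ blocks $x^{(1)},\ldots,x^{(h)}$ of length $\Theta(d/R)$, and encode block $j$ as a sign vector $w_j \in \{-1,0,+1\}^d$ supported on a designated set $S_j \subseteq [d]$ of size $\Theta(d/R)$. The $h$ heavy rows are the geometrically scaled vectors $\alpha^j w_j$ for a suitable constant $\alpha > 1$, together with enough low-norm ``background'' rows so that exactly $h$ rows exceed the heaviness threshold $\frnorm{A}/\sqrt{d\cdot\polylog(d)}$ and so that $\sigma_1(A)^2/\sigma_2(A)^2 = R$. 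The scalings are arranged so that once Bob subtracts the contribution of his suffix $x^{(j^*+1)},\ldots,x^{(h)}$ (which he knows), the dominant direction of the residual is $w_{j^*}$, with entries of magnitude $\Theta(\sqrt{R/d})$ on $S_{j^*}$. An output $\hat v$ with $\|\hat v - v_1\|_2^2 = O(1/R^2)$ is then sharp enough to recover the signs of $v_1$ on $S_{j^*}$ and hence decode the queried bit of $x$.

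To embed the reduction into a random-order stream, Alice and Bob share public randomness fixing a uniformly random permutation of the full stream. Alice simulates the algorithm on her rows (the higher-indexed blocks with larger scale) and transmits the $s$-bit state to Bob, who continues the simulation on his rows to recover $\hat v$. The main obstacle is that Augmented Indexing is naturally tied to a specific ordering whereas here the stream is randomly ordered, so the geometric scaling among the heavy rows must survive a random permutation; this is handled by choosing $\alpha$ so that each heavy row dominates the sum of all smaller-scale ones by a factor of $R$, which makes the identity of the dominant direction in the residual stable under random reordering except on an event of probability $(1/2)^{R+1}$ (one factor of $1/2$ per unit of scale margin). This exactly matches the failure probability in the theorem statement and explains the coupling between $R$ in the space bound and $R$ in the success probability.
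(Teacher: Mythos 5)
There are two genuine gaps in your plan. First, the decoding step of your reduction does not work with geometrically scaled, disjointly supported blocks. If the heavy rows are $\alpha^j w_j$ with the $w_j$ supported on disjoint sets $S_j$, then $\T{A}A$ is (up to the low-norm background) block diagonal, so the top eigenvector $v_1$ is exactly the direction of the largest-scale block and carries essentially no information about block $j^*$ for $j^* < h$; more generally, any block whose weight in $v_1$ is geometrically small is invisible to an output guaranteed only $\la \hat{v}, v_1\ra^2 \ge 1 - 1/(CR^2)$, since that allows deviations of order $1/R$ per coordinate direction. Moreover, Bob cannot ``subtract the contribution of his suffix'': he never sees the algorithm's internal linear-algebraic state, only the returned unit vector on the whole stream, and he cannot append rows amplifying the direction $w_{j^*}$ because its signs are precisely what he is trying to learn. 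The paper's instance is built to avoid exactly this: a single uniformly random row $\bx_{\bi}$ among $h$ random sign rows is amplified by $k = 4R$ rows that share its first $(1-\gamma)d$ coordinates, so that $v_1$ of the full matrix is forced to correlate with the \emph{hidden} last $\gamma d$ coordinates of $\bx_{\bi}$; the factor $h$ then comes not from a per-block decoding but from an averaging (direct-sum) argument over which row got amplified, showing the state must retain $\Omega(d/R)$ bits of mutual information about each of the first $h/2$ rows simultaneously.

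Second, your handling of the random order is not sound. Under a uniformly random permutation of the full stream, Alice's and Bob's rows interleave, so a single $s$-bit message cannot simulate the streaming algorithm; you would need a message at every Alice-to-Bob crossing, destroying the bound. Your explanation of the $(1/2)^{R+1}$ term (stability of the dominant direction under reordering) is also vacuous: the eigenvectors of $\T{A}A$ do not depend on the order of the rows at all. The actual role of the exponentially small failure probability is different: the paper conditions on the event $\calE$ that the random permutation places the special row $\bx_{\bi}$ in the first $h/2$ positions and all $k$ amplifying rows after position $h/2$, an event of probability only about $(1/2)^{k+1}$; the success probability $1 - (1/2)^{\Omega(R)}$ is needed so that the algorithm still succeeds with constant probability conditioned on this rare ordering. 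Given $\calE$, the final state is conditionally independent of $\bx_{\bi}[(1-\gamma)d+1:d]$ given the mid-stream state, and a data-processing plus entropy argument (not a black-box two-party reduction) yields $H(\bs_{\text{mid}}) \ge \Omega(hd/R)$. Without an event of this kind, or some substitute argument that turns the randomly ordered stream into a one-way information bottleneck, your reduction does not go through.
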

\paragraph{Techniques.} The randomized power method \citep{gu2015subspace} algorithm to approximate the top eigenvector samples a random Gaussian vector $\bg$ and iteratively computes the vector $v = (\T{A}A)^t \bg$\footnote{Note that $\T{A}A \cdot v = \sum_i \la a_i, v\ra a_i$.} for $t = \Theta(\log d)$ iterations and shows that when the gap $R$ is large, $v/\opnorm{v}$ is a good approximation for $v_1$. Thus, the algorithm needs to \emph{see} the quadratic form $\T{A}A$ multiple times and hence, it cannot be implemented in the single-pass streaming setting of this paper.

Assume that the stream is randomly ordered and that there are no heavy rows. Our key observation is that if the stream is long enough, then we can see $t$ approximations $\T{\bB_j}\bB_j$\footnote{We use bold symbols to denote random variables.} of the quadratic form $\T{A}A$. Here the matrices $\bB_1, \ldots, \bB_t$ are formed by sampling and rescaling the rows of the matrix $A$ and importantly, the rows of $\bB_1, \ldots, \bB_t$ do not overlap in the stream, that is, they appear one after the other. Thus we can compute $v' = (\T{\bB_t}\bB_t) \cdots (\T{\bB_1}\bB_1) \cdot \bg$ for the starting vector $\bg$ in a single pass over the stream. We prove that such matrices $\bB_j$ exist using the row norm sampling result of \cite{magdon2010row}. Now, the main issue is to show that $v'/\opnorm{v'}$ is a good approximation to the top eigenvector $v_1$. We crucially use a singular value inequality of \cite{wang1997some} to prove that $\opnorm{\T{\bB_j}\bB_j - \T{A}A} \le \varepsilon\opnorm{A}^2$ for all $j$ suffices for $v'/\opnorm{v'}$ to be a good approximation to $v_1$.

The above analysis assumes that there are no heavy rows. Indeed, suppose that a matrix $A$ has a row $a$ with a large Euclidean norm which is orthogonal to all the other rows. Also assume that the top eigenvector of the matrix $A$ is in this direction. Since, the matrices $\bB_1, \ldots, \bB_t$ are non-overlapping substreams of the matrix $A$, at most one of the matrices $\bB_j$ can have the row $a$ and hence the vector $v'/\opnorm{v'}$ will not be a good approximation to $a/\opnorm{a}$, the top eigenvector. Thus, we need to handle the heavy rows separately. We show that, by storing all the rows with a Euclidean norm larger than $\frnorm{A}/\sqrt{d \cdot \polylog(d)}$ and running the above described algorithm on the remaining set of rows, we can obtain a good approximation to the top eigenvector.

Our lower bound (Theorem~\ref{thm:informal-lowerbound}) shows that any single-pass streaming algorithm must use space proportional to the number of heavy rows, and therefore our procedure that handles the heavy rows separately gives near-optimal bounds. 

Finally, the row norm sampling technique of \cite{magdon2010row} serves as a general technique to reduce the number of rows in the stream while (approximately) preserving the top eigenvector. We use this observation to improve the $R = \Omega(\log n \cdot \log d)$ for arbitrary streams in \cite{price2023spectral} to $R = \Omega(\log^2 d)$. We then show that assuming a uniformly random order, the analysis of \cite{price2023spectral} can be improved to show that $R = \Omega(\log d)$ suffices. Thus, for random order streams, techniques before our work can be used to approximate the top eigenvector when the gap $R = \Omega(\log d)$. Our work improves upon this to give an algorithm that works for streams with $R = \Omega(1)$.

\paragraph{Implications to practice.} Often, in practical situations, we can assume that the rows being streamed are sampled independently from a nice-enough distribution, in which case Oja's algorithm, as discussed, can approximate the top eigenvector accurately given enough samples. However, \emph{independence} and assumptions on the covariance matrix can be very strong assumptions in some cases and in such cases, our algorithm only requires that the order of the rows in the stream be uniformly random, in which case we output an approximation with provable guarantees.

\paragraph{Organization.} We first introduce the row-norm sampling procedure to obtain approximate quadratic forms. The proof is a slight modification of that of \cite{magdon2010row}. The only difference is that we instead consider a version that samples each row in the input independently with some appropriate probability and keeps the rows that are sampled after scaling appropriately. We then introduce and analyze our block power iteration algorithm when all rows have roughly the same Euclidean norm, and then extend it to the general case, which is our main result. Finally, we provide a lower bound showing that $\Omega(td/R)$ bits of space is necessary to obtain constant correlation with the top eigenvector. 

\section{Power Method with Approximate Quadratic Forms}
In this section, we present and analyze our algorithm for approximating the top eigenvector of $\T{A}A$ when the rows of $A$ are presented to the algorithm in a uniformly random order.

We first show a row sampling technique that reduces the number of rows in the stream. The row-norm sampling technique for approximating the quadratic form $\T{A}A$ with spectral norm guarantees was given by  \cite{magdon2010row}. The technique works irrespective of the order of the rows.
\subsection{Sampling for Row Reduction}\label{sec:sampling-for-row-reduction}
\begin{theorem}
	Let $A$ be an arbitrary $n \times d$ matrix. Given $p \in [0, 1]^n$, let $\bQ$ be an $n \times n$ diagonal matrix such that for each $i \in [n]$, we independently set $\bQ_{ii} = 1/\sqrt{p_i}$ with probability $p_i$ and $0$ otherwise. If for all $i$,
	\begin{align*}
		p_i \ge \min\left(1, C\frac{\opnorm{a_i}^2}{\varepsilon^2\opnorm{A}^2}\log d\right),
	\end{align*}
	then with probability $1 - 1/\poly(d)$, 
	$
		\opnorm{\T{A}A - \T{A}\T{\bQ}\bQ A} \le \varepsilon\opnorm{A}^2.
	$
	With probability at least $1 - 1/\poly(d)$, the matrix $\bQ$ has at most $O(\varepsilon^{-2}\rho \log d)$ non-zero entries, where $\rho = \frnorm{A}^2/\opnorm{A}^2$ denotes the stable rank of matrix $A$.
 \label{thm:row-norm-sampling}
\end{theorem}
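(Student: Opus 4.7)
The plan is to express $\T{A}\T{\bQ}\bQ A - \T{A}A$ as a sum of independent, mean-zero random matrices and apply a matrix Bernstein inequality. Writing $\T{A}\T{\bQ}\bQ A = \sum_i \bQ_{ii}^2 a_i\T{a_i}$, set $\bX_i := \bQ_{ii}^2 a_i\T{a_i} - a_i\T{a_i}$. Since $\E[\bQ_{ii}^2] = 1$ when $p_i > 0$ (and the row is deterministically kept if $p_i = 1$, contributing $\bX_i = 0$), each $\bX_i$ has mean $0$, and the $\bX_i$ are independent across $i$.

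The deterministic bound is $\opnorm{\bX_i} \le \max(\opnorm{a_i}^2/p_i,\opnorm{a_i}^2)$, and the hypothesis $p_i \ge C\opnorm{a_i}^2\log d/(\varepsilon^2\opnorm{A}^2)$ (when the $\min$ does not saturate) gives $\opnorm{a_i}^2/p_i \le \varepsilon^2\opnorm{A}^2/(C\log d)$; the case $p_i=1$ only strengthens the bound, so the uniform bound $R := \varepsilon^2\opnorm{A}^2/(C\log d)$ holds for all $i$. For the variance, a direct computation yields $\E[\bX_i^2] = (1/p_i - 1)\opnorm{a_i}^2 \, a_i\T{a_i} \preceq (\opnorm{a_i}^2/p_i)\, a_i\T{a_i}$, so
\begin{align*}
\opnorm{\sum_i \E[\bX_i^2]} \le \opnorm{\sum_i \frac{\opnorm{a_i}^2}{p_i} a_i\T{a_i}} \le \frac{\varepsilon^2\opnorm{A}^2}{C\log d}\opnorm{\T{A}A} = \frac{\varepsilon^2\opnorm{A}^4}{C\log d}.
\end{align*}
Matrix Bernstein then gives $\opnorm{\sum_i \bX_i} \le \varepsilon\opnorm{A}^2$ with probability at least $1 - 2d\exp(-c\varepsilon^2\opnorm{A}^4/(\sigma^2 + R\varepsilon\opnorm{A}^2/3))$, which for a sufficiently large constant $C$ in the hypothesis becomes $1 - 1/\poly(d)$.

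For the sparsity bound, the number of non-zero entries of $\bQ$ is $\sum_i \mathbf{1}[\bQ_{ii}\ne 0]$, a sum of independent Bernoullis with mean $\sum_i p_i \le \sum_i C\opnorm{a_i}^2\log d/(\varepsilon^2\opnorm{A}^2) = C\rho\log d/\varepsilon^2$ (rows with $p_i = 1$ have $\opnorm{a_i}^2 \ge \varepsilon^2\opnorm{A}^2/(C\log d)$ by definition, so they are absorbed into the same sum up to a constant). A standard multiplicative Chernoff bound then yields that the number of non-zeros is $O(\varepsilon^{-2}\rho\log d)$ with probability $1 - 1/\poly(d)$, and a union bound over the two events completes the proof.

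The only mild subtlety is bookkeeping around the $\min(1,\cdot)$ in the definition of $p_i$, which must be handled carefully so that rows fixed with $p_i = 1$ do not disturb either the variance estimate (they contribute $\bX_i = 0$) or the expected-sparsity calculation; neither creates a real obstacle. Given the earlier results on matrix Bernstein this is routine, and the main conceptual ingredient is the observation that the per-row probability condition $p_i \gtrsim \opnorm{a_i}^2\log d/(\varepsilon^2\opnorm{A}^2)$ simultaneously controls both the uniform and variance parameters by the same quantity $\varepsilon^2\opnorm{A}^2/\log d$.
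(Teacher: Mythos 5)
Your proof is correct and follows essentially the same route as the paper: the same decomposition of $\T{A}\T{\bQ}\bQ A - \T{A}A$ into independent mean-zero rank-one updates (your $\bX_i$ equals the paper's $\bY_i = (\bX_i/p_i - 1)a_i\T{a_i}$), the same uniform bound $\varepsilon^2\opnorm{A}^2/(C\log d)$ and variance bound $\varepsilon^2\opnorm{A}^4/(C\log d)$, Matrix Bernstein for the concentration, and a Chernoff bound on $\sum_i \mathbf{1}[\bQ_{ii}\ne 0]$ for the sparsity claim. No material differences.
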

\begin{proof}
Let $\bX_i$ denote an indicator random variable which denotes if $\bQ_{ii}$ is nonzero. Note $\E[\bX_i] = p_i$ and $\bX_1, \ldots, \bX_n$ are independent. Define a $d \times d$ random matrix $\bY_i = (\bX_i/p_i - 1)a_i\T{a_i}$, where $a_i$ denotes the $i$-th row of $A$. We note that
\begin{align*}
	\T{A}A - \T{A}\T{\bQ}\bQ A = \sum_{i=1}^n (\bX_i/p_i - 1)a_i\T{a_i} = \sum_{i=1}^n \bY_i.
\end{align*}
We use the Matrix Bernstein inequality \citep{tropp2015introduction} to bound $\opnorm{\sum_i \bY_i}$. We first uniformly upper bound $\opnorm{\bY_i}$. If $p_i = 1$, by definition $\opnorm{\bY_i} = 0$ with probability $1$. Let $p_i \ne 0$. Then, $\opnorm{(\bX_i/p_i - 1)a_i\T{a_i}} \le \opnorm{a_i\T{a_i}}/p_i \le {\varepsilon^2\opnorm{A}^2}/{C\log d}$
with probability $1$. 

We now bound $\opnorm{\sum_i \E[\bY_i^2]}$. 
\begin{align*}
	\sum_i \E[\bY_i^2] &= \sum_i \E[(1/p_i - 1)^2] \opnorm{a_i}^2 a_i\T{a_i}\\
	&=\sum_{i : p_i > 0} (1/p_i - 1)\opnorm{a_i}^2a_i\T{a_i}\\
	&\preceq \sum_{i: p_i > 0} \frac{\varepsilon^2\opnorm{A}^2}{C\opnorm{a_i}^2\log d}\opnorm{a_i}^2a_i\T{a_i}\\
	&\preceq \frac{\varepsilon^2\opnorm{A}^2}{C\log d}\T{A}A
\end{align*}
which implies $\opnorm{\sum_i \E[\bY_i^2]} \le \varepsilon^2\opnorm{A}^4/(C\log d)$. Now, we obtain
\begin{align*}
	\Pr[\opnorm{\sum_i \bY_i} \ge \varepsilon\opnorm{A}^2] &\le 2d \cdot \exp\left(-\frac{\varepsilon^2\opnorm{A}^4/2}{\varepsilon^2\opnorm{A}^4/(C\log d) + \varepsilon^3\opnorm{A}^4/(3C\log d)}\right)\\
	&\le 2d \cdot \exp\left(-\frac{C\log d}{2(1+\varepsilon/3)}\right).
\end{align*}
If $C \ge 6(1+\varepsilon/3)$, then $\Pr[\opnorm{\sum_i \bY_i} \ge \varepsilon\opnorm{A}^2] \le 1 - 2/d^2$ which implies that with probability $\ge 1 - 2/d^2$, $\opnorm{\T{A}A - \T{A}\T{\bQ}\bQ A} \le \varepsilon\opnorm{A}^2$.

Now, the number of non-zero entries in the matrix $\bQ$ is equal to $\sum_i \bX_i$. We note $\E[\sum_i \bX_i] \le C\varepsilon^{-2}\rho\cdot \log d$. By a Chernoff bound, we obtain that $\sum_i \bX_i = O(\varepsilon^{-2}\rho\cdot \log d)$ with probability $\ge 1 - 1/\poly(d)$.
\end{proof}

Note that given the value of $\opnorm{A}$, the sampling procedure in this theorem can be performed in a stream. Additionally, as the original stream is uniformly randomly ordered, the sub-sampled stream is also uniformly randomly ordered assuming that the sampling is independent of the order of the rows. 

Given that all of the non-zero entries of the matrix have absolute value at least $1/\poly(nd)$ and at most $\poly(nd)$, we have that $\opnorm{A}^2$ lies in the interval $[1/\poly(nd), \poly(nd)]$. Thus, we can guess the value of $\opnorm{A}^2$ as $2^{i}/\poly(nd)$ for $i = 0, \ldots, O(\log(nd))$ and one of these values must be a $2$-approximation to $\opnorm{A}^2$, and thus sub-sampling the rows using that guess satisfies the conditions in the above theorem. We can run the streaming algorithms on all the streams simultaneously to obtain $O(\log nd)$ vectors $u_1, \ldots, u_{O(\log nd)}$ as the candidates for being an approximation to the top eigenvector. From Theorem~\ref{thm:row-norm-sampling}, the candidate vector $u_j$ computed on the stream obtained by sampling the rows with the correct probabilities is a good approximation to the top eigenvector, and therefore $\opnorm{A \cdot u_j}$ is large for that value of $j$. Thus, the vector $u_j$ with the largest value $\opnorm{A \cdot u_j}$ is a good approximation to the top eigenvector $v_1$. If $\bG$ is a Gaussian matrix with $O(\varepsilon^{-2}\log d)$ rows, then for all $u_j$, we can approximate $\opnorm{A \cdot u_j}$ up to a $1 \pm \varepsilon$ factor using $\opnorm{\bG \cdot A \cdot u_j}$ by the Johnson-Lindenstrauss lemma. Additionally, the matrix $\bG \cdot A$ can be maintained in the stream using $O(\varepsilon^{-2} \cdot d \log d)$ bits (when we see a row $a_i$, we sample an independent Gaussian vector $\bg_i$ and add $\bg_i \T{a_i}$ to an accumulator to maintain $\bG \cdot A$). Thus, at the end of processing the stream, we can compute a vector $u_j$ that has a large value $\opnorm{A \cdot u_j}$,  and hence is a good approximation for $v_1$.

If we can process each created stream using $s$ bits of space, then the overall space requirement is $O(s \cdot \log(nd) + d \cdot \polylog(d))$ bits, using $O(s)$ bits for each guess for the value of $\opnorm{A}^2$ and $O(d \cdot \polylog(d))$ bits for storing a Gaussian sketch of the matrix with $\varepsilon = 1/\polylog(d)$.
\subsection{Random-Order Streams with bounds on Norms}\label{sec:random-order-streams-bounded-norms}
\begin{algorithm}
\caption{Approximate Eigenvector for Streams with no Large Norms}
\label{alg:bounded-norms}
\KwIn{An $n \times d$ matrix $A$ with $n = \Omega(\eta \cdot \rho(A) \cdot \log^2 d/\varepsilon^2)$, $\max_i \opnorm{a_i}^2/\min_i \opnorm{a_i}^2 \le \eta$}
\KwOut{A vector $\bz$}
\DontPrintSemicolon
$t \gets \ceil{C_1\log d}$\; 
Compute $\bG \cdot A$ in the stream where $\bG$ is a Gaussian matrix with $O(\varepsilon^{-2}\log d)$ rows\;
\For{$\rho = 1, 2, 4, \ldots, d$ simultaneously}{
$p \gets C_2\eta\rho\log d/n\varepsilon^2$\tcp*{$p \le 1/(5t)$ for $\rho \le 2 \cdot \rho(A)$}
$\bz_{\rho} \sim N(0, 1)^d$\;
    \For{$j=1, \ldots, t$}{
        $\by_j \gets \text{Bin}(n, p)$\;
        \If{$\by_j > 2np$}{
            \Return{$\perp$}
        }
        \tcp{The matrix $A_{j \cdot (2np) : j \cdot (2np) + \by_j}$ corresponds to $\bB_j$ in the analysis.}
        $\textit{acc} \gets 0$\;
        \For{$i = (j-1) \cdot (2np) + 1, \ldots, (j-1) \cdot (2np) + \by_j$}{
            $\textit{acc} \gets \textit{acc} + \la a_i, \bz_{\rho}\ra \cdot a_i$\;
        }
        \tcp{Here $\textnormal{\textit{acc}} = \T{\bB_j}{\bB_j}\bz_{\rho}$}
        $\bz_{\rho} \gets \textit{acc}$\;
        $\bz_{\rho} \gets \bz_{\rho} / \opnorm{{\bz}_{\rho}}$\;
    }
}
\Return{$\argmax_{\bz \in \set{\bz_1, \bz_2, \bz_4, \ldots, \bz_d}}\opnorm{(\bG \cdot A)\bz}$}
\end{algorithm}

We now present the analysis of the block power method for random order streams assuming that the Euclidean norms of all the rows in $A$ are close to each other. We later remove this assumption. Suppose there exists a parameter $\eta$ such that $(\max_i \opnorm{a_i}^2)/(\min_i \opnorm{a_i}^2) \le \eta$. If $\eta$ is close to $1$ then all the rows in the stream have roughly the same norm. 

Let $p = C\eta \rho \log (d)/\varepsilon^2n$. We can see that for any row $a_i$ in the stream,
\begin{align*}
	C\frac{\opnorm{a_i}^2}{\varepsilon^2\opnorm{A}^2}\log d \le C\frac{\eta \frnorm{A}^2/n}{\varepsilon^2\opnorm{A}^2}\log d \le \frac{C\eta \rho\log d}{n\varepsilon^2} = p.
\end{align*}
Thus, $p$ is greater than the probability with which we need to sample each row in the row-norm sampling  result in Theorem~\ref{thm:row-norm-sampling}. Now if we perform such a sampling of the rows of $A$, we sample $\text{Bin}(n, p)$\footnote{$\text{Bin}(n, p)$ denotes the binomial distribution with parameters $n$ and $p$.} number of rows, which is tightly concentrated around $np = \varepsilon^{-2}C\eta \rho \log d$. Thus, if we first sample $\by \sim \text{Bin}(n, p)$ and then consider the first $\by$ number of rows in the random order stream, then we will have sampled from a distribution satisfying the requirements in Theorem~\ref{thm:row-norm-sampling} and can therefore obtain a matrix $\bB$ such that
\begin{align*}
	\opnorm{\T{\bB}\bB - \T{A}A} \le \varepsilon\opnorm{A}^2.
\end{align*}
Thus, assuming that the rows appear in a uniformly random order lets us show that the first $\by$ rows of the stream can be used to compute an approximation to the quadratic form $\T{A}A$. We will now show that we can obtain $O(\log d)$ such quadratic forms in the stream given that the stream is long enough.

Assume that the number of rows in the stream $n = \Omega(\eta \rho \log ^2 d/\varepsilon^2)$. We partition the stream into $t = \Theta(\log d)$ groups as follows: the first $2n p$ rows are placed in the group $1$, the second $2n p$ rows are placed in the group $2$, and so on. Note that since $n = \Omega(\eta \rho \log^2 d /\varepsilon^2)$, we can form $t$ such groups. Since the rows are uniformly randomly ordered, the joint distribution of the rows appearing in group $1$ is the same as that of the joint distribution of the rows appearing in group $2$ and so on. Let $\by_1, \ldots, \by_t \sim \text{Bin}(n, p)$ be drawn independently. With probability $\ge 1 - 1/\poly(d)$, we have $\by_i \le (3/2)np$ for all $i$. For $i = 1, \ldots, t$, let $\bB_i$ be the matrix formed by the first $\by_i$ rows in group $i$. Using a union bound, we have that with probability $\ge 1 - 1/\poly(d)$, for all $i = 1, \ldots, t$, 
\begin{align*}
    \opnorm{\T{A}A - \frac{1}{p}\T{\bB_i}{\bB_i}} \le \varepsilon\opnorm{A}^2.
\end{align*}
Conditioned on the above event, we will now show that running the power method on the blocks $\bB_1, \ldots, \bB_t$ lets us approximate the top singular vector of the matrix $A$.
\begin{assumption}
	We assume that $\sigma_1(A) / \sigma_2(A) \ge 2$.
\end{assumption}
\begin{lemma}
Let $\varepsilon > 1/\poly(d)$ be an accuracy parameter and $t = \Omega(\log d)$ be the number of iterations. Let $\varepsilon \le c/t^2$ for a small constant $c$. Suppose $B_1, \ldots, B_t$ all satisfy $\opnorm{\T{A}A - \T{B_j}B_j} \le \varepsilon\opnorm{A}^2$ for $\varepsilon < 1/5$. If $\bg$ is a random vector sampled from the Gaussian distribution, then the unit vector
	\begin{align*}
		\hat{v} \coloneqq \frac{(\T{B_t}B_t) \cdots (\T{B_1}B_1)\bg}{\opnorm{(\T{B_t}B_t) \cdots (\T{B_1}B_1)\bg}}
	\end{align*}
	satisfies
	\begin{align*}
		\langle\hat{v}, v_1\rangle^2 \ge \frac{1}{1 + C't\sqrt{\varepsilon}}	\end{align*}
	with probability $\ge 9/10$ for a large enough constant $C'$. Here $v_1$ denotes the top right singular vector of the matrix $A$.
	\label{lma:appx-quads-to-correlation}
\end{lemma}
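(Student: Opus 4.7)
Let $\lambda_i := \sigma_i(A)^2$, so the assumption gives $\lambda_1 \ge 4\lambda_2$. Write $\T{B_j}B_j = \T{A}A + E_j$ with $\opnorm{E_j} \le \varepsilon\lambda_1$, and set $M := \prod_{j=1}^t \T{B_j}B_j$ (in the order applied to $\bg$) and $N := (\T{A}A)^t$. My first step is to bound the operator norm perturbation. Expanding the product and using sub-multiplicativity gives
\begin{align*}
  \opnorm{M - N} \le ((1+\varepsilon)^t - 1)\lambda_1^t = O(t\varepsilon)\lambda_1^t
\end{align*}
for bounded $t\varepsilon$. Applying Wang's (1997) singular value inequality---which yields Weyl-type bounds of the form $|\sigma_i(M) - \sigma_i(N)| \le \opnorm{M - N}$---then gives $\sigma_1(M) \ge (1 - O(t\varepsilon))\lambda_1^t$ and $\sigma_2(M) \le \lambda_2^t + O(t\varepsilon)\lambda_1^t$. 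Since $t = \Omega(\log d)$ and $\gamma := \lambda_2/\lambda_1 \le 1/4$, the term $\gamma^t$ is negligible, so the spectral gap ratio is $\sigma_2(M)/\sigma_1(M) = O(t\varepsilon)$.

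Next I would analyze the correlation $\la \hat v, v_1\ra^2 = (v_1^{\T{}} M\bg)^2/\opnorm{M\bg}^2$ by decomposing the Gaussian input as $\bg = c_1 v_1 + g_\perp$ with $g_\perp \perp v_1$. Standard Gaussian concentration yields $|c_1| = \Omega(1)$ and $\opnorm{g_\perp} = O(\sqrt{d})$ with probability $\ge 9/10$. Since $N$ commutes with the projector $P_\perp := I - v_1 v_1^{\T{}}$, we have $v_1^{\T{}} N v_1 = \lambda_1^t$ and $v_1^{\T{}} N g_\perp = 0$. Writing $v_1^{\T{}} M\bg = c_1 \, v_1^{\T{}} M v_1 + v_1^{\T{}} M g_\perp$ gives $v_1^{\T{}} M\bg \ge c_1\lambda_1^t(1 - O(t\varepsilon)) - \opnorm{(M - N) g_\perp}$, and $\opnorm{P_\perp M\bg} \le \lambda_2^t \opnorm{g_\perp} + \opnorm{(M - N)\bg}$.

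The main obstacle is that the naive bound $\opnorm{(M-N)g_\perp} \le \opnorm{M-N}\opnorm{g_\perp} = O(t\varepsilon\sqrt{d})\lambda_1^t$ is too loose: it yields a correlation of $1 - O(d t^2 \varepsilon^2)$, which fails to match the lemma's claim in the regime $\varepsilon = \Theta(1/t^2)$ with large $d$. To recover the claimed bound $1/(1 + C' t\sqrt{\varepsilon})$, I would sharpen the estimate on $\opnorm{(M-N) g_\perp}$ by expanding $M - N$ as a sum over non-empty subsets $S \subseteq [t]$ indicating where $E_j$ replaces $\T{A}A$: for each monomial, the rightmost block of $\T{A}A$ factors (to the right of the smallest index of $S$) acts on $g_\perp \in v_1^\perp$ and contributes $\lambda_2^a$ rather than $\lambda_1^a$. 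Summing the resulting geometric series in $\gamma$ yields $\opnorm{(M-N) g_\perp} \le O(\varepsilon\lambda_1^t\opnorm{g_\perp}/(1-\gamma))$, removing one factor of $t$.

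Combining these estimates, the ratio $\opnorm{P_\perp M\bg}/|v_1^{\T{}} M\bg|$ is bounded by $O(\gamma^t\sqrt{d} + \varepsilon\sqrt{d} + t\varepsilon)$, where the $\gamma^t \sqrt d$ term is $d^{-\Omega(1)}$ after $t = \Omega(\log d)$ iterations. Under the assumption $\varepsilon \le c/t^2$, the remaining error terms can be absorbed into a bound of the form $\la\hat v, v_1\ra^2 \ge 1/(1 + C' t\sqrt{\varepsilon})$ for an appropriate constant $C'$, completing the proof.
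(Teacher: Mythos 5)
Your proposal correctly identifies the relevant quantities and roughly the right shape of argument (decompose $\bg$ into its component along $v_1$ and the orthogonal part, then bound the growth of each under $M$). However, there is a genuine gap that makes the final step fail, and there is also a mischaracterization of the key theorem.

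\paragraph{The gap.} Your argument bounds $\opnorm{P_\perp M\bg}$ by an operator-norm estimate multiplied by $\opnorm{g_\perp} = O(\sqrt d)$. Even after your sharpening (expanding $M-N$ over subsets $S$ and peeling off the $\lambda_2$ factors to the right of $\min S$), you still get $\opnorm{(M-N)g_\perp} = O(\varepsilon\,\opnorm{g_\perp}\lambda_1^t) = O(\varepsilon\sqrt d\,\lambda_1^t)$. Substituting into your ratio $\opnorm{P_\perp M\bg}/|\T{v_1}M\bg| = O(\gamma^t\sqrt d + \varepsilon\sqrt d + t\varepsilon)$, the term $\varepsilon\sqrt d$ dominates and is far larger than what the lemma requires. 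Concretely, for $\langle\hat v, v_1\rangle^2 \ge 1/(1 + C't\sqrt\varepsilon)$ you need the ratio to satisfy $x^2 = O(t\sqrt\varepsilon)$, i.e., $x = O(\sqrt t\,\varepsilon^{1/4})$. But $\varepsilon\sqrt d \le \sqrt t\,\varepsilon^{1/4}$ requires $\varepsilon = O((t/d)^{2/3})$, which is $1/\poly(d)$ — much smaller than the range $\varepsilon \le c/t^2 = c/\log^2 d$ that the lemma must cover. Your concluding sentence that the remaining errors ``can be absorbed'' into $1/(1+C't\sqrt\varepsilon)$ is not justified in this regime. The paper avoids this loss by working with the Frobenius norm $\frnorm{P_\perp M}^2 = \frnorm M^2 - \opnorm{\T M v_1}^2$, using the elementary fact that $\E\bigl[\opnorm{P_\perp M\bg}^2\bigr] = \frnorm{P_\perp M}^2$ rather than the worst-case $\opnorm{P_\perp M}^2\opnorm{\bg}^2$. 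The dimension factor $d$ only appears inside $\frnorm M^2$ multiplied by $\sigma_2(\T{B_j}B_j)^{2t} \le (1/4 + \varepsilon)^t\sigma_1(A)^{4t}$, which is killed when $t = \Omega(\log d)$.

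\paragraph{Misuse of Wang's theorem.} You describe Wang's (1997) inequality as giving Weyl-type perturbation bounds $|\sigma_i(M) - \sigma_i(N)| \le \opnorm{M - N}$. That is the standard Weyl inequality, not Wang's. The inequality from \cite{wang1997some} used in the paper is the majorization
\[
\sum_i \sigma_i(A_1\cdots A_t)^r \le \sum_i \sigma_i(A_1)^r\cdots\sigma_i(A_t)^r,
\]
which controls the \emph{product} of singular values multiplicatively. This is precisely what yields $\frnorm M^2 \le (1+\varepsilon)^{2t}\sigma_1^{4t} + (d-1)(1/4+\varepsilon)^{t}\sigma_1^{4t}$; an additive Weyl bound on each $\sigma_i(M)$, applied to a sum over $d$ indices, would again reintroduce the $d$ factor you are trying to remove (since $\sigma_i(M) \le \lambda_i^t + \opnorm{M-N}$ applied for all $i$ gives $\frnorm M^2 \gtrsim d\cdot\opnorm{M-N}^2$ in the worst case). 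The multiplicative control from Wang's theorem, combined with the exponential decay $(\lambda_2/\lambda_1)^t$, is the essential mechanism that makes $\frnorm{P_\perp M}^2$ comparable to $\opnorm{\T Mv_1}^2$ up to an $O(t\sqrt\varepsilon)$ relative error. Without it, or an equivalent variance/Frobenius argument, the $\sqrt d$ (or $d$) factor persists and the bound of the lemma does not follow.
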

To prove this lemma, our strategy is to show that the matrix product $M \coloneqq (\T{B_t}B_t) \cdots (\T{B_1}B_1)$ has a stable rank close to $1$ --- meaning it has one very large singular value and the rest of the singular values are small. We can then argue that the vector $\hat{v} = M\bg/\opnorm{M\bg}$ is in the direction of the top singular vector $M$. Using the fact that $\T{v_1}(\T{B_j}B_j)v_1 \ge (1-\varepsilon)\opnorm{A}^2$ for all $j$, we show that the top singular vector of $M$ must have a large correlation with $v_1$. Therefore, it follows that the vector $\hat{v}$ has a large correlation with $v_1$ as well. As part of the proof, we crucially use an inequality from \cite{wang1997some}. The full formal proof now follows.
\begin{proof}
	Define $M \coloneqq (\T{B_t}B_t) \cdots (\T{B_1}B_1)$. Our strategy is to show that if $v_1$ is the top singular vector of the matrix $A$, then $\opnorm{\T{v_1}M}$ is comparable to $\frnorm{M}$ given that $\sigma_1(A)/\sigma_2(A) \ge 2$. We can then prove the lemma using simple properties of the Gaussian vector $\bg$.

For an arbitrary $j$, let $(\T{B_j}B_j)v_1 = \alpha v_1 + \Delta$ where $\Delta \perp v_1$. We note that
$
		\T{v_1}(\T{B_j}B_j)v_1 = \alpha
$. We have $\alpha = \T{v_1}\T{B_j}B_jv_1 \ge (1 - \varepsilon)\sigma_1(A)^2$ using the fact that $\opnorm{\T{B_j}B_j - \T{A}A} \le \varepsilon\opnorm{A}^2$ and $\T{v_1}\T{A}Av_1 = \sigma_1(A)^2 = \opnorm{A}^2$. If we show that $\Delta$ is small, then the vector $(\T{B_j}B_j)v_1$ is oriented in a direction very close to that of $v_1$. Note that
\begin{align*}
	\opnorm{(\T{B_j}B_j)v_1} \le \opnorm{\T{B_j}B_j} \le (1+\varepsilon)\sigma_1(A)^2
\end{align*}
and $\opnorm{(\T{B_j}B_j)v_1}^2 = \alpha^2 + \opnorm{\Delta}^2$ which implies $\opnorm{\Delta}^2 \le ((1+\varepsilon)^2 - (1-\varepsilon)^2)\sigma_1(A)^4 = 4\varepsilon \cdot \sigma_1(A)^4$ and thus $\opnorm{\Delta} \le \sqrt{4\varepsilon}\sigma_1(A)^2$. Now,
\begin{align*}
	&\opnorm{\T{M}v_1}\\
 &= \opnorm{(\T{B_1}B_1) \cdots (\T{B_{t-1}}B_{t-1})(\langle \T{B_t}B_t v_1, v_1\rangle v_1 + \Delta_1)}\\
	&\ge \langle \T{B_t}B_tv_1, v_1\rangle\opnorm{(\T{B_1}B_1) \cdots (\T{B_{t-1}}B_{t-1})v_1} - \opnorm{(\T{B_1}B_1)\cdots(\T{B_{t-1}}B_{t-1})}\opnorm{\Delta_1}\\
	&\ge ((1-\varepsilon)\sigma_1(A)^2)\opnorm{(\T{B_1}B_1) \cdots (\T{B_{t-1}}B_{t-1})v_1} - (\sqrt{4\varepsilon}\sigma_1(A)^2)\opnorm{(\T{B_1}B_1) \cdots (\T{B_{t-1}}B_{t-1})}.
\end{align*}
Expanding similarly, we obtain
\begin{align*}
	\opnorm{\T{M}v_1} \ge (1 - \varepsilon)^t\sigma_1(A)^{2t} - t\sqrt{4\varepsilon}(1+\varepsilon)^{t-1}\sigma_1(A)^{2t}.
\end{align*}
Assuming $\varepsilon \le c/t$ for a small constant $c$, we note that $(1 - \varepsilon)^t \ge (1 - 2t\varepsilon)$ and $(1 + \varepsilon)^t \le (1 + 2t\varepsilon)$ which implies
\begin{align*}
	\opnorm{\T{M}v_1} = \opnorm{(\T{B_1}B_1) \cdots (\T{B_t}B_t)v_1} \ge (1 - 2t\varepsilon - 4t\sqrt{\varepsilon})\sigma_1(A)^{2t}.
\end{align*}
We shall now show a bound on $\frnorm{M} = \frnorm{(\T{B_t}B_t) \cdots (\T{B_1}B_1)}$ which lets us show that the unit vector $\hat{v}$ is highly correlated with $v_1$. To bound the quantity $\frnorm{M}$, we first note the following facts:
\begin{enumerate}
	\item $\opnorm{\T{B_j}B_j} \le (1 + \varepsilon)\sigma_1(A)^2$, and
	\item $\sigma_2(\T{B_j}B_j) \le \sigma_2(A)^2 + \varepsilon\sigma_1(A)^2 \le (1/4 + \varepsilon)\sigma_1(A)^2$ by our gap assumption.
\end{enumerate}
Now, we use the following theorem.
\begin{theorem}[{\cite[Theorem~3(ii)]{wang1997some}}]
	For any $r > 0$ and any matrices $A_1, \ldots, A_t$, 
	\begin{align*}
		\sum_i (\sigma_i(A_1 \cdots A_t))^r \le \sum_i \sigma_i(A_1)^r \cdots \sigma_i(A_t)^r.
	\end{align*}
\end{theorem}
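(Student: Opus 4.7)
The plan is to deduce the inequality from the classical Horn log-majorization for singular values of a product of matrices, combined with the standard fact that weak log-majorization passes to weak majorization after applying $u \mapsto u^r$. Throughout, I order singular values in nonincreasing fashion: $\sigma_1(M) \ge \sigma_2(M) \ge \cdots \ge 0$.

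First, I would establish Horn's multiplicative inequality for $t$ factors: for every $k \ge 1$,
\[
\prod_{i=1}^k \sigma_i(A_1 A_2 \cdots A_t) \;\le\; \prod_{i=1}^k \sigma_i(A_1)\, \sigma_i(A_2) \cdots \sigma_i(A_t).
\]
I would prove this by induction on $t$. For the base case $t = 2$, I would use the antisymmetric tensor (exterior) power: from the functorial identity $\Lambda^k(AB) = \Lambda^k(A)\Lambda^k(B)$ and the identity $\sigma_1(\Lambda^k(M)) = \prod_{i=1}^k \sigma_i(M)$ (immediate from the SVD together with the observation that $\Lambda^k$ of a diagonal matrix is diagonal with entries equal to the $k$-fold products of the original entries), submultiplicativity of the spectral norm yields
\[
\prod_{i=1}^k \sigma_i(AB) = \sigma_1(\Lambda^k(AB)) \le \sigma_1(\Lambda^k(A))\,\sigma_1(\Lambda^k(B)) = \prod_{i=1}^k \sigma_i(A) \cdot \prod_{i=1}^k \sigma_i(B).
\]
The inductive step groups $A_1 \cdots A_t = A_1 \cdot (A_2 \cdots A_t)$ and applies the base case followed by the inductive hypothesis.

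Second, set $x_i := \sigma_i(A_1 \cdots A_t)$ and $y_i := \sigma_i(A_1)\cdots\sigma_i(A_t)$, both nonincreasing and nonnegative. Step one furnishes weak log-majorization, $\prod_{i=1}^k x_i \le \prod_{i=1}^k y_i$ for every $k$. I would then invoke the classical lemma (essentially Weyl, see Bhatia's \emph{Matrix Analysis}): if two nonincreasing nonnegative sequences satisfy weak log-majorization and $\phi$ is a continuous increasing function with $\phi \circ \exp$ convex, then $\sum_i \phi(x_i) \le \sum_i \phi(y_i)$. Taking $\phi(u) = u^r$, for which $\phi(\exp(t)) = e^{rt}$ is convex and increasing whenever $r > 0$, immediately yields
\[
\sum_i \sigma_i(A_1 \cdots A_t)^r \;\le\; \sum_i \sigma_i(A_1)^r \cdots \sigma_i(A_t)^r,
\]
which is exactly the claimed inequality.

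The main obstacle is concentrated in the $t = 2$ base case of Horn's inequality; once the exterior-power identity $\sigma_1(\Lambda^k M) = \prod_{i=1}^k \sigma_i(M)$ is accepted, everything else reduces to straightforward bookkeeping or appeals to standard majorization results. A minor technicality concerns vanishing singular values when passing through logarithms in the majorization lemma, but this is easily handled by a perturbation argument: slightly enlarge the $y_i$ (or replace $x_i, y_i$ by $x_i + \delta, y_i + \delta$) to make all entries strictly positive, apply the above argument, and pass to the limit $\delta \to 0$ using continuity of $u \mapsto u^r$.
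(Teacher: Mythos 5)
The paper itself does not prove this statement: it is quoted directly as Theorem~3(ii) of \cite{wang1997some} and used as a black box, so there is no internal proof to compare against. Your derivation is the standard route to this inequality and is essentially correct: Horn's multiplicative inequality $\prod_{i=1}^k \sigma_i(A_1\cdots A_t) \le \prod_{i=1}^k \sigma_i(A_1)\cdots\sigma_i(A_t)$, obtained for $t=2$ from the exterior-power identity $\sigma_1(\Lambda^k M)=\prod_{i=1}^k\sigma_i(M)$ and submultiplicativity of the spectral norm, then extended to general $t$ by induction, gives weak log-majorization of $x_i=\sigma_i(A_1\cdots A_t)$ by $y_i=\sigma_i(A_1)\cdots\sigma_i(A_t)$ (both sequences nonincreasing and nonnegative), and the transfer lemma with $\phi(u)=u^r$, for which $\phi\circ\exp$ is convex and increasing for every $r>0$, yields the claimed sum inequality. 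The one place that deserves a line more care is your $\delta$-perturbation for avoiding $\log 0$: adding the same $\delta$ to both sequences does preserve weak log-majorization, but that is itself a fact requiring justification (e.g.\ via convexity and monotonicity of $u\mapsto\log(e^u+\delta)$), so as written the fix is slightly circular-looking. A cleaner patch is to discard the indices $i$ with $y_i=0$ (for such $i$ the product inequality forces $x_i=0$ as well, so both sides lose only zero terms) and, on the remaining indices, floor the $x_i$ at a small $\epsilon>0$, which preserves the finitely many product inequalities for $\epsilon$ small enough, apply the positive-entries lemma, and let $\epsilon\to 0$. With that detail tightened, your argument is a complete, self-contained proof of a result the paper only cites, which is a genuine addition rather than a deviation.
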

Applying the above theorem with $r=2$, we obtain
\begin{align*}
	\frnorm{(\T{B_t}B_t) \cdots (\T{B_1}B_1)}^2 &\le (1+\varepsilon)^{2t}\sigma_1(A)^{4t} + (d-1)(1/4 + \varepsilon)^{t}\sigma_1(A)^{4t}\\
	&\le (1+4t\varepsilon)\sigma_1(A)^{4t} + \frac{d}{3^t}\sigma_1(A)^{4t}.
\end{align*}
When $t \ge 3\log (d/\varepsilon)$, we have 
$
	\frnorm{(\T{B_t}B_t) \cdots (\T{B_1}B_1)}^2 \le (1 + 4t\varepsilon + \varepsilon)\sigma_1(A)^{4t}.
$
We now use the following lemma.
\begin{lemma}
Let $\bg$ be a Gaussian random vector with each of the components being an independent standard Gaussian random variable. Let $\hat{v} = M\bg/\opnorm{M\bg}$. For any unit vector $v$, with probability $\ge 4/5$, 
\begin{align*}
|\la \hat{v}, v\ra |^2 \ge \frac{1}{1 + C\frac{\frnorm{M}^2 - \opnorm{\T{M}v}^2}{\opnorm{\T{M}v}^2}}
\end{align*}
for a large enough universal constant $C$.
\end{lemma}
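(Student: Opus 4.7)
The plan is to decompose the numerator and denominator of $\la\hat{v},v\ra^2$ into a squared Gaussian term and an orthogonal residual, then apply Gaussian anti-concentration to the first and Markov's inequality to the second.

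Concretely, write $\hat{v}=M\bg/\opnorm{M\bg}$, so
\[
  \la \hat{v},v\ra^2 \;=\; \frac{\la \bg, \T{M}v\ra^2}{\opnorm{M\bg}^2}.
\]
Split $M\bg = v\la \T{M}v,\bg\ra + (I-v\T{v})M\bg$. Setting $N \coloneqq (I-v\T{v})M$ and using the Pythagorean identity,
\[
  \opnorm{M\bg}^2 \;=\; \la \bg, \T{M}v\ra^2 + \opnorm{N\bg}^2.
\]
A direct trace computation gives $\frnorm{N}^2 = \tr(\T{M}(I-v\T{v})M) = \frnorm{M}^2 - \opnorm{\T{M}v}^2$. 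Hence, letting $X = \la \bg, \T{M}v\ra^2$ and $Y = \opnorm{N\bg}^2$, we have $\la \hat{v},v\ra^2 = X/(X+Y)$.

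Now I would control $X$ and $Y$ separately. The random variable $\la \bg, \T{M}v\ra$ is a one-dimensional Gaussian with variance $\opnorm{\T{M}v}^2$, so by standard anti-concentration there is a universal constant $c_1 > 0$ with $\Pr[X \ge c_1 \opnorm{\T{M}v}^2] \ge 9/10$. For the denominator residual, $\E[Y] = \E[\T{\bg}\T{N}N\bg] = \tr(\T{N}N) = \frnorm{N}^2$, so Markov's inequality gives $\Pr[Y \le 10\,\frnorm{N}^2] \ge 9/10$. A union bound yields that both events hold simultaneously with probability $\ge 4/5$.

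On this good event,
\[
  \frac{X}{X+Y} \;\ge\; \frac{c_1\opnorm{\T{M}v}^2}{c_1\opnorm{\T{M}v}^2 + 10\,\frnorm{N}^2}
  \;=\; \frac{1}{1 + (10/c_1)\cdot\frac{\frnorm{M}^2 - \opnorm{\T{M}v}^2}{\opnorm{\T{M}v}^2}},
\]
which is exactly the claimed bound with $C = 10/c_1$. There is no real obstacle here beyond keeping track of constants; the only subtlety is confirming that $\frnorm{(I-v\T{v})M}^2 = \frnorm{M}^2 - \opnorm{\T{M}v}^2$, which is a single trace manipulation using that $I-v\T{v}$ is an orthogonal projection.
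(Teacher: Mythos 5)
Your proof is correct and matches the paper's argument essentially step for step: the same Pythagorean decomposition $\opnorm{M\bg}^2 = \la \bg, \T{M}v\ra^2 + \opnorm{(I-v\T{v})M\bg}^2$, the same identification of $\la\bg,\T{M}v\ra$ as $N(0,\opnorm{\T{M}v}^2)$ and of $\E\opnorm{(I-v\T{v})M\bg}^2$ as $\frnorm{M}^2 - \opnorm{\T{M}v}^2$ via a trace computation, and the same union bound combining Gaussian anti-concentration with Markov. You simply make explicit the anti-concentration and Markov steps that the paper folds into the phrase ``by a union bound, with probability $\ge 4/5$.''
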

\begin{proof}
    Since $v$ is a unit vector, we can write $\opnorm{M \bg}^2 = |\T{v}M\bg|^2 + \opnorm{(I-v\T{v})M\bg}^2$. Hence, we have
    \begin{align*}
        |\la \hat{v}, v\ra|^2 = \frac{|\T{v}Mg|^2}{\opnorm{M\bg}^2} = \frac{1}{1 + \frac{\opnorm{(I-v\T{v})M\bg}^2}{|\T{v}M\bg|^2}}.
    \end{align*}
    We now note that $\T{v}M\bg \sim N(0, \opnorm{\T{M}v}^2)$ and $\E[\opnorm{(I-v\T{v})M\bg}^2] = \text{tr}(\T{M}(I-v\T{v})M) = \frnorm{M}^2 - \opnorm{\T{M}v}^2$. By a union bound, with probability $\ge 4/5$, we have
    \begin{align*}
        \frac{\opnorm{(I-v\T{v})M\bg}^2}{|\T{v}M\bg|^2} \le C\frac{\frnorm{M}^2 - \opnorm{\T{M}v}^2}{\opnorm{\T{M}v}^2}
    \end{align*}
    for a large enough constant $C$. Therefore, with probability $\ge 4/5$, we get that \[|\la\hat{v}, v\ra|^2 \ge 
    \frac{1}{1 + C\frac{\frnorm{M}^2 - \opnorm{\T{M}v}^2}{\opnorm{\T{M}v}^2}}.\qedhere\]
\end{proof}
Applying the above lemma for $M = (\T{B_t}B_t) \cdots (\T{B_1}B_1)$ and $v = v_1$, we obtain
\begin{align*}
	|\la \hat{v}, v_1\ra|^2 \ge \frac{1}{1 + C' t\sqrt{\varepsilon}}
\end{align*}
with probability $\ge 4/5$.
\end{proof}

If $t = \Theta(\log d)$ and $1/\poly(d) \le \varepsilon \le c/(\log d)^2$, then the above lemma shows that $\hat{v}$ has a large correlation with the top singular vector $v_1$. Using this lemma, we show that Algorithm~\ref{alg:bounded-norms} can be used to obtain an approximation for $v_1$ in random order streams with bounded norms.
\begin{theorem}
	Let $\alpha \ge 1/\poly(d)$ be an accuracy parameter. Let $\eta$ be a parameter such that $\frac{\max_i\opnorm{a_i}^2}{\min_i \opnorm{a_i}^2} \le \eta$.
If the number of rows in the stream $n = \Omega(\alpha^{-4} \cdot \rho(A) \cdot \eta\cdot \log ^ 6 d)$, where $\rho(A) = \frnorm{A}^2/\opnorm{A}^2$ and the rows in the stream are ordered uniformly at random, then we can compute a vector $\hat{v}$ using the block power method that satisfies
	\begin{align*}
		|\la v_1, \hat{v}\ra|^2 \ge 1 - 3\alpha
	\end{align*}
	with probability $\ge 4/5$ if $\sigma_1(A)/\sigma_2(A) \ge 2$. The algorithm uses $O(d \cdot \polylog(d)/\alpha^4)$ bits of space.
 \label{thm:bounded-norms}
\end{theorem}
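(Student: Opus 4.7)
The plan is to set the accuracy parameter of the row-norm sampling step so that the product-of-blocks guarantee from Lemma~\ref{lma:appx-quads-to-correlation} delivers $1-3\alpha$ correlation, then verify that the sample-size and space budgets match the statement. Concretely, I would choose the per-iteration error $\varepsilon = c\alpha^2/\log^2 d$ (for a small constant $c$) and the number of iterations $t = \Theta(\log d)$. With these choices, $C't\sqrt{\varepsilon} \le 3\alpha$, so the lemma gives $|\la \hat v, v_1\ra|^2 \ge 1/(1+3\alpha) \ge 1 - 3\alpha$.

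Next I would verify the event that all $t$ blocks are good approximate quadratic forms. Take $p = C_2 \eta \rho(A) \log d /(n\varepsilon^2)$; the norm-ratio bound $\max\opnorm{a_i}^2/\min\opnorm{a_i}^2\le\eta$ implies $\opnorm{a_i}^2 \le \eta\frnorm{A}^2/n$, so $p$ dominates the per-row probability required by Theorem~\ref{thm:row-norm-sampling}. Because the stream is in uniformly random order and the $t$ contiguous windows of length $2np$ in Algorithm~\ref{alg:bounded-norms} are disjoint, the rows kept inside window $j$ are distributed exactly as if we had run the Bernoulli$(p)$ sampling of Theorem~\ref{thm:row-norm-sampling} on a fresh uniformly random permutation. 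The binomial check $\by_j \le 2np$ fails with probability $1/\poly(d)$ by Chernoff, and Theorem~\ref{thm:row-norm-sampling} gives $\opnorm{\T A A - (1/p)\T{\bB_j}\bB_j}\le\varepsilon\opnorm{A}^2$ with probability $1-1/\poly(d)$ per block; a union bound over the $t=O(\log d)$ blocks still leaves success probability $1-1/\poly(d)$. Plugging the required window length $2np = \Theta(\eta\rho(A)\log^5 d/\alpha^4)$ times $t=\Theta(\log d)$ recovers the stated $n = \Omega(\alpha^{-4}\rho(A)\eta\log^6 d)$.

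I would then invoke Lemma~\ref{lma:appx-quads-to-correlation} with $B_j = (1/\sqrt p)\bB_j$: conditioned on the good event, the output of the loop is $\hat{\bz}_\rho = M\bg/\opnorm{M\bg}$ for $M = (\T{B_t}B_t)\cdots(\T{B_1}B_1)$, which has correlation $\ge 1-3\alpha$ with $v_1$ with probability $\ge 9/10$. To remove the dependence of $p$ on the unknown $\rho(A)$, the algorithm runs $O(\log d)$ copies in parallel, one for each guess $\rho \in \{1,2,4,\ldots,d\}$; the copy whose guess satisfies $\rho(A)\le \rho\le 2\rho(A)$ yields a good $\hat{\bz}_\rho$. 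Following the paragraph after Theorem~\ref{thm:row-norm-sampling}, I use the maintained Gaussian sketch $\bG\cdot A$ (with $\bG$ of $O(\varepsilon^{-2}\log d)$ rows) and select $\argmax_\rho \opnorm{\bG A \bz_\rho}$, which by Johnson--Lindenstrauss and the fact that the best candidate has $\opnorm{A \bz_\rho}\ge (1-O(\alpha))\sigma_1(A)$ returns a vector of correlation $\ge 1-3\alpha$ with $v_1$.

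The space accounting is routine: each parallel copy only stores $\bz_\rho$, the accumulator, and a counter, costing $O(d\log d)$ bits; there are $O(\log d)$ guesses; the Gaussian sketch costs $O(\varepsilon^{-2}\cdot d \log d) = O(d\log^5 d/\alpha^4)$ bits. The total is $O(d\cdot \polylog(d)/\alpha^4)$ as claimed. The single genuine subtlety — and the main step that merits care — is arguing that truncating each random-order window at a fresh binomial count $\by_j$ is distributionally equivalent to independently Bernoulli-sampling that window, so that Theorem~\ref{thm:row-norm-sampling} applies block by block without correlations leaking between blocks; this follows because the row identities inside disjoint contiguous windows of a uniformly random permutation are exchangeable and independent of the binomial truncations, which are drawn from fresh randomness independent of the stream order.
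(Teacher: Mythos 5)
Your proposal follows the paper's proof essentially step for step: row-norm sampling (Theorem~\ref{thm:row-norm-sampling}) applied to disjoint contiguous windows of the random-order stream yields $t = \Theta(\log d)$ approximate quadratic forms, Lemma~\ref{lma:appx-quads-to-correlation} converts the block power iteration on these into a correlation bound, and parallel guesses of $\rho(A)$ with selection via a Gaussian sketch of $A$ handle the unknown stable rank. The one slip is in constant bookkeeping: you spend your slack twice. You invoke Lemma~\ref{lma:appx-quads-to-correlation} to conclude the best candidate already satisfies $|\la\bz_\rho, v_1\ra|^2 \ge 1-3\alpha$, and then also claim the $\argmax$ over $\opnorm{\bG A\bz_\rho}$ returns a vector with the same $1-3\alpha$ correlation. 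But the selection compares $\opnorm{A\bz}$ rather than $\la\bz,v_1\ra$, and converting a lower bound on $\opnorm{A\bz}^2$ back to a correlation via
\begin{align*}
\opnorm{A\bz}^2 \le \la\bz,v_1\ra^2\,\sigma_1(A)^2 + \bigl(1-\la\bz,v_1\ra^2\bigr)\frac{\sigma_1(A)^2}{R}
\end{align*}
costs roughly a factor $1/(1-1/R)$, so with $R\ge 2$ a candidate at $1-3\alpha$ correlation only guarantees the selected $\bz$ is at roughly $1-6\alpha$. The fix is trivial and is what the paper does: choose $c$ in $\varepsilon = c\alpha^2/\log^2 d$ small enough that the lemma gives $1-\alpha$, leaving room for the selection step to land at $1-3\alpha$.
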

\begin{proof}
Set $\varepsilon = \alpha^2/C\log^2 d$ for a large enough constant $C$. Assuming $n = \Omega(\alpha^{-4} \rho \eta \log^6 d)$, we have $n = \Omega(\varepsilon^{-2} \rho \eta \log^2 d)$. Now consider the execution of Algorithm~\ref{alg:bounded-norms} on matrix $A$, with parameters $\eta$ and $\varepsilon$. Let $\rho = 2^{j}$ be such that $\rho(A)/2 \le \rho \le \rho(A)$, and consider the execution in the algorithm with parameter $\rho$. Using Theorem~\ref{thm:row-norm-sampling}, with probability $\ge 1 - 1/\poly(d)$, the algorithm computes $t$ matrices $\bB_1, \ldots, \bB_t$ such that for all $j \in [t]$,
\begin{align*}
    \opnorm{\frac{1}{p}\T{\bB_j}\bB_j - \T{A}A} \le \varepsilon\opnorm{A}^2.
\end{align*}
Noting that $\bz_\rho = (\T{\bB_t}\bB_t) \cdots (\T{\bB_1}\bB_1)\bg/\opnorm{(\T{\bB_t}\bB_t) \cdots (\T{\bB_1}\bB_1)\bg}$, by Lemma~\ref{lma:appx-quads-to-correlation}, we have with probability $\ge 9/10$ that
\begin{align*}
    \la \bz_{\rho}, v_1\ra^2 \ge \frac{1}{1+C't\sqrt{\varepsilon}} \ge 1 - \alpha.
\end{align*}
Thus, for $\rho$ which satisfies $\rho(A)/2 \le \rho \le \rho(A)$, the algorithm computes a vector $\bz_{\rho}$ that has a large correlation with the vector $v_1$. Since the algorithm does not know the exact value of $\rho$, it computes an approximation for $\opnorm{A\bz}^2$ for all $\bz \in \set{\bz_1, \bz_2, \bz_4, \ldots, \bz_d}$. First, we condition on the fact that with probability $\ge 1 - 1/\poly(d)$, for all $\bz_{i}$, $\opnorm{\bG A\bz_i}^2 = (1 \pm \varepsilon)\opnorm{A\bz_i}^2$. Since $\la \bz_{\rho}, v_1\ra^2 \ge (1 - \alpha)$, we note that $\opnorm{\bG A\bz_{\rho}}^2 \ge (1 - \varepsilon)(1 - \alpha)\sigma_1(A)^2$. Now, for the vector $\bz$ returned by the algorithm, we have
$\opnorm{\bA \bz}^2 \ge (1 - O(\varepsilon))(1 - \alpha)\sigma_1(A)^2$ which implies that 
\begin{align*}
    \la \bz, v_1\ra^2 \cdot \sigma_1(A)^2 + (1 - \la \bz, v_1\ra^2) \frac{\sigma_1(A)^2}{R} \ge \opnorm{A\bz}^2 \ge (1 - \alpha - O(\varepsilon))\sigma_1(A)^2
\end{align*}
and therefore $\la \bz, v_1\ra^2 \ge 1 - 3\alpha$ since $R \ge 2$.
\end{proof}


\subsection{Random Order Streams without Norm Bounds}
Assuming that the random order streams are long enough, Theorem~\ref{thm:bounded-norms} shows that if all the squared row norms are within an $\eta$ factor, then the block power method outputs a vector with a large correlation with the top eigenvector of the matrix $\T{A}A$. For general streams, the factor $\eta$ could be quite large and hence the algorithm requires very long streams to output an approximation to $v_1$. 

If there are no \emph{heavy} rows, i.e., rows with a Euclidean norm larger than $\frnorm{A}/\sqrt{d \cdot \polylog(d)}$, then the row norm sampling procedure in Theorem~\ref{thm:row-norm-sampling} can be used to convert any randomly ordered stream of rows into a uniformly random stream of rows that all have the same norm. The row norm sampling procedure computes a probability $p_i = \min(1, C\varepsilon^{-2}\opnorm{a_i}^2\log d/\opnorm{A}^2)$ and samples the row $a_i$ with probability $p_i$. If sampled, then the row $a_i$ is scaled by $1/\sqrt{p_i}$. From Theorem~\ref{thm:row-norm-sampling}, we have that the top eigenvector of the \emph{quadratic form} of the sampled-and-rescaled submatrix is a good approximation to the top eigenvector $\T{A}A$ when the gap $R$ is large enough. Suppose $p_i < 1$. If the row $a_i$ is sampled, we then have
\begin{align*}
    \opnorm{a_i/\sqrt{p_i}} = \frac{\varepsilon\opnorm{A}}{\sqrt{C\log d}}.
\end{align*}
Thus, if $p_i < 1$ for all $i$, then all the sampled-and-rescaled rows have the same Euclidean norm and therefore, we can run the algorithm from Theorem~\ref{thm:bounded-norms} by setting $\eta = 1$. Note that $p_i = 1$ only if $\opnorm{a_i}^2 \ge \varepsilon^2\opnorm{A}^2/C\log(d)$. Since we assumed that there are no heavy rows, there is no row with $p_i = 1$ as long as $\varepsilon \ge 1/\polylog(d)$. Thus, using Theorem~\ref{thm:bounded-norms} on the row norm sampled substream directly gives us a good approximation to the top eigenvector. However, in general, the streams can have rows with large Euclidean norm. We will now state our theorem and describe how such streams can be handled.
\begin{theorem}
    Let $A$ be an $n \times d$ matrix with its non-zero entries satisfying $1/\poly(d) \le |A_{i, j}| \le \poly(d)$, and hence representable using $O(\log d)$ bits of precision. Let $R = \sigma_1(A)^2/\sigma_2(A)^2$. Assume $2 \le R \le C_1\log^2 d$. Let $h$ be the number of rows in $A$ with norm at most $\frnorm{A}/\sqrt{d \cdot \polylog(d)}$, where $\polylog(d) = \log^{C_2} d$ for a large enough universal constant $C_2$. Given the rows of the matrix $A$ in a uniformly random order, there is an algorithm using $O((h + 1) \cdot d \cdot \polylog(d) \cdot \log n)$ bits of space and which outputs a vector $\hat{v}$ such that with probability $\ge 4/5$, $\hat{v}$ satisfies $\la \hat{v}, v_1\ra^2 \ge 1 - 8/\sqrt{R}$, where $v_1$ is the top eigenvector of the matrix $\T{A}A$.
    \label{thm:main-theorem-formal}
\end{theorem}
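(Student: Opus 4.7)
My plan is to reduce Theorem~\ref{thm:main-theorem-formal} to Theorem~\ref{thm:bounded-norms} and Lemma~\ref{lma:appx-quads-to-correlation} by separating heavy rows (stored exactly) from light rows (streamed through row-norm sampling to produce a uniform-norm substream), and then running a modified block power method in which each per-iteration quadratic form is augmented by the exact heavy-row contribution $\T{H}H$. The space decomposes as $O(hd\log n)$ bits for the heavy-row buffer plus $O(d\cdot\polylog(d))$ bits for the block power iteration, matching the claimed $O((h+1)\cdot d\cdot\polylog(d)\cdot\log n)$.

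First I would guess $\frnorm{A}^2$ and $\opnorm{A}^2$ as powers of two in $O(\log(nd))$ parallel copies (as in Section~\ref{sec:sampling-for-row-reduction}). In the copy with both guesses correct up to constants, each arriving row $a_i$ is placed in a buffer $H$ when $\opnorm{a_i}\ge\frnorm{A}/\sqrt{d\log^{C_2}d}$ (heavy), and otherwise is Bernoulli-sampled with probability $p_i=C\varepsilon_1^{-2}\opnorm{a_i}^2\log d/\opnorm{A}^2$, emitting $a_i/\sqrt{p_i}$ into a substream $\hat L$ when sampled. For $C_2$ chosen large enough relative to $1/\varepsilon_1^2$, every light row satisfies $p_i<1$, so all emitted rows have the same norm $\varepsilon_1\opnorm{A}/\sqrt{C\log d}$. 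Because the input is random-order and the sampling depends only on row norm, $\hat L$ is a random-order stream of uniform-norm rows, and Theorem~\ref{thm:row-norm-sampling} gives $\opnorm{\T{\hat L}\hat L-\T{L}L}\le\varepsilon_1\opnorm{A}^2$ with probability $1-1/\poly(d)$, where $L$ is the light submatrix.

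Next I would feed $\hat L$ into Algorithm~\ref{alg:bounded-norms} with $\eta=1$ and accuracy parameter $\varepsilon_2$, modified so that block $j$ updates $v\leftarrow(\T{H}H+(1/p)\T{\bB_j}\bB_j)v$ and then normalizes. Since $(1/p)\T{\bB_j}\bB_j$ approximates $\T{\hat L}\hat L$ to spectral error $\varepsilon_2\opnorm{A}^2$ and $\T{\hat L}\hat L$ approximates $\T{L}L$ to error $\varepsilon_1\opnorm{A}^2$, the triangle inequality gives $\opnorm{\T{H}H+(1/p)\T{\bB_j}\bB_j-\T{A}A}\le(\varepsilon_1+\varepsilon_2)\opnorm{A}^2$, so Lemma~\ref{lma:appx-quads-to-correlation} applied with $t=\Theta(\log d)$ blocks yields $\la\hat v,v_1\ra^2\ge 1/(1+C't\sqrt{\varepsilon_1+\varepsilon_2})$. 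Choosing $\varepsilon_1,\varepsilon_2=\Theta(1/(R\cdot\polylog(d)))$ so that $t\sqrt{\varepsilon_1+\varepsilon_2}=O(1/\sqrt R)$, while keeping the expected substream length $|\hat L|\sim\rho\log d/\varepsilon_1^2$ above the $\Omega(\rho\log^2 d/\varepsilon_2^2)$-length requirement of the $t$ blocks, gives correlation $\ge 1-O(1/\sqrt R)\ge 1-8/\sqrt R$. A Johnson-Lindenstrauss sketch $\bG\cdot A$ maintained in parallel then picks the best $\hat v$ across the $O(\log(nd))$ guesses, as in Theorem~\ref{thm:bounded-norms}.

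The main obstacle I anticipate is verifying that the substream $\hat L$ — whose length is random — satisfies Algorithm~\ref{alg:bounded-norms}'s preconditions with high probability; this follows from Chernoff concentration of $|\hat L|$ around its mean $\Theta(\rho\log d/\varepsilon_1^2)$, together with $\rho(\hat L)=O(\rho(A))$ implied by the spectral approximation $\T{\hat L}\hat L\approx\T{L}L$. A secondary subtlety is classifying rows as heavy or light without knowing $\frnorm{A}$ online: the parallel-guessing trick from Section~\ref{sec:sampling-for-row-reduction} resolves this since only the correct-guess branch needs to succeed, and the Johnson-Lindenstrauss candidate-selection step selects that branch at the end.
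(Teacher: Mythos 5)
Your proposal takes a genuinely different route from the paper, but it has a gap that I do not think can be repaired as stated.

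Your key move is to fold the heavy-row contribution $\T{H}H$ into every block of the power iteration, so that each per-block quadratic form $\T{H}H + (1/p)\T{\bB_j}\bB_j$ directly approximates $\T{A}A$, and Lemma~\ref{lma:appx-quads-to-correlation} can then be invoked once, with no case analysis. Conceptually this is cleaner than what the paper does, and the approximation chain you write down ($\opnorm{\T{H}H + (1/p)\T{\bB_j}\bB_j - \T{A}A} \le (\varepsilon_1 + O(\varepsilon_2))\opnorm{A}^2$ by the triangle inequality, noting that $\T{H}H + (1/p)\T{\bB_j}\bB_j$ is a PSD matrix of the required form $\T{B_j'}B_j'$) is mathematically sound. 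The problem is that the update $v \leftarrow (\T{H}H + (1/p)\T{\bB_j}\bB_j)v$ must be executed at the moment block $j$ of the light-row substream has been consumed, which happens partway through the stream. At that point only the heavy rows that happen to have arrived so far are in the buffer; in a uniformly random order the heavy rows are spread evenly, so at the $j$-th iteration you have roughly a $j/t$ fraction of them, not all of $H$. You cannot substitute $\T{\bH_j}\bH_j$ (the buffer so far) for $\T{H}H$, because the missing heavy rows can account for $\Theta(\opnorm{A}^2)$ of spectral mass --- for example, a single heavy row of norm $\approx\opnorm{A}$ arriving near the end of the stream would make $\opnorm{\T{\bH_j}\bH_j + (1/p)\T{\bB_j}\bB_j - \T{A}A} = \Theta(\opnorm{A}^2)$ for every $j < t$, destroying the approximation guarantee required by Lemma~\ref{lma:appx-quads-to-correlation}. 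Deferring the application of $\T{H}H$ to the end does not help either, because the iterates $z_j$ are defined sequentially and you cannot re-scan the blocks $\bB_j$; storing the intermediate $d\times d$ matrices is, of course, far beyond the space budget.

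The paper sidesteps this exact obstacle by a case analysis rather than a combined iteration: with a threshold $\beta \approx 1/\sqrt{R}$, either $\opnorm{A_{\text{heavy}} v_1} \ge (1-\beta)\opnorm{A}$, in which case the top eigenvector of the stored $\T{A_{\text{heavy}}}A_{\text{heavy}}$ (computed offline at the end) already correlates $1-O(\beta)$ with $v_1$; or $\opnorm{A_{\text{light}} v_1}^2 \ge \beta\opnorm{A}^2$, in which case $A_{\text{light}}$ itself has gap $\ge 2$, the block power method is run on the light substream \emph{alone} (no $\T{H}H$ term in the iteration, so it is implementable in one pass), and a separate argument shows the top singular vector of $A_{\text{light}}$ is within $1-O(1/(R\beta))$ of $v_1$. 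Both candidates are produced and a JL sketch of $A$ selects the better one at the end. To salvage your writeup you would need to replace the augmented iteration with this two-branch structure (or find another way to decouple the heavy-row contribution from the streaming iterations). As a secondary point, once you adopt the two-branch structure you also need the explicit bound $\la v_1', v_1\ra^2 \ge 1 - O(1/(R\beta))$ for the light-only eigenvector, which does not follow from the approximation chain you set up but requires the separate argument the paper gives using $\opnorm{A(I - v_1\T{v_1})} = \sigma_1/\sqrt{R}$ and $\opnorm{A_{\text{light}}}^2 \ge \beta\opnorm{A}^2$.
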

The key idea in proving this theorem is to partition the matrix $A$ into $A_{\text{heavy}}$ and $A_{\text{light}}$, where $A_{\text{heavy}}$ denotes the matrix with the heavy rows and $A_{\text{light}}$ denotes the matrix with the rest of the rows of $A$. Since we assume that there are at most $h$ heavy rows, we can store the matrix $A_{\text{heavy}}$ using $O(h \cdot d \cdot \polylog(d))$ bits of space. Now consider the following two cases: (i) $\opnorm{A_{\text{heavy}}} \ge (1 - \beta)\opnorm{A}$ or (ii) $\opnorm{A_{\text{heavy}}} < (1 - \beta)\opnorm{A}$ for some parameter $\beta$. In the first case, we can show that the top eigenvector $u$ of $\T{A_{\text{heavy}}}A_{\text{heavy}}$ is a good approximation for $v_1$. Since, we store the full matrix $A_{\text{heavy}}$, we can compute $u$ exactly at the end of the stream. Suppose $\opnorm{A_{\text{heavy}}} < (1 - \beta)\opnorm{A}$. By the triangle inequality, we have $\opnorm{A_{\text{light}}} > \beta\opnorm{A}$. If we set $\beta$ large enough compared to $1/R$, then we can show that the top eigenvector $u'$ of $\T{A_{\text{light}}}A_{\text{light}}$ is a good approximation of $v_1$. From the above discussion, since all the rows of $A_{\text{light}}$ are \emph{light}, we can obtain a stream using Theorem~\ref{thm:row-norm-sampling} such that all the rows have the same norm and additionally, the top eigenvector of this stream is a good approximation for $u'$ and therefore $v_1$. We then approximate the top eigenvector of the new stream using Theorem~\ref{thm:bounded-norms}. Setting $\beta$ appropriately, we show that this procedure can be used to compute a vector $\hat{v}$ satisfying $\la \hat{v}, v_1\ra^2 \ge 1 - O(1/\sqrt{R})$ proving the theorem. The formal proof is given below.
\begin{proof}
 Partition the matrix $A$ into $A_{\text{light}}$ and $A_{\text{heavy}}$, where $A_{\text{heavy}}$ is the submatrix with rows $a_i$ such that $\opnorm{a_i} > \frnorm{A}/\sqrt{d \cdot \polylog(d)}$ and $A_{\text{light}}$ is the remaining rows. From our assumption, the number of rows in $A_{\text{heavy}}$ is at most $h$.
Note that given a uniformly random stream of rows of $A$, we can obtain a uniformly random stream of rows of  $A_{\text{light}}$ by just filtering out the rows in $A_{\text{heavy}}$.

Suppose, $\opnorm{A_{\text{heavy}} \cdot v_1} \ge (1 - \beta)\opnorm{A}$ for a parameter $\beta$ to be chosen later. Let $v_1'$ be the top singular vector of the matrix $A_{\text{heavy}}$. Note
\begin{align*}
	\opnorm{A \cdot v_1'}^2 \ge \opnorm{A_{\text{heavy}} \cdot v_1'}^2 \ge \opnorm{A_{\text{heavy}} \cdot v_1}^2 \ge (1 - \beta)^2\opnorm{A}^2
\end{align*}
and therefore we have $\la v_1', v_1\ra^2 \ge 1 - 4\beta$, assuming $R \ge 2$. Thus, while processing the stream, we can store all the heavy rows and at the end of the stream compute the top right singular vector of $A_{\text{heavy}}$, in order to obtain a good approximation for $v_1$.

Suppose $\opnorm{A_{\text{heavy}} \cdot v_1} \le (1 - \beta)\opnorm{A}$. This implies $\opnorm{A_{\text{light}} \cdot v_1}^2 \ge \opnorm{A}^2 - \opnorm{A_{\text{heavy}} \cdot v_1}^2 \ge \beta \cdot \opnorm{A}^2$. If we set $\beta \ge 2/R$, we have
\begin{align*}
	\frac{\sigma_1(A_{\text{light}})^2}{\sigma_2(A_{\text{light}})^2} \ge \frac{\beta\opnorm{A}^2}{\sigma_2(A)^2} \ge 2.
\end{align*}
Let $v_1'$ be the top singular vector of $A_{\text{light}}$. We will describe how to approximate $v_1'$. Consider applying the row norm sampling procedure with parameter $\varepsilon$ to the matrix $A_{\text{light}}$. Given a row $a_i \in {A_{\text{light}}}$ the corresponding sampling probability $p_i$ is given by
\begin{align*}
    p_i = \frac{C\log d \cdot \opnorm{a_i}^2}{\varepsilon^2\opnorm{A_{\text{light}}}^2} \le \frac{C\log d \cdot \frnorm{A}^2/(d \cdot \polylog(d))}{\varepsilon^2\beta^2\opnorm{A}^2} \le \frac{C}{\varepsilon^2\beta^2\polylog(d)}.
\end{align*}
Assuming that $\varepsilon^2 \beta^2 \ge 1/\polylog(d)$, we obtain that $p_i < 1$ for all the rows in the matrix $A_{\text{light}}$. Let $\bB_{\text{light}}$ be the matrix obtained after applying the row norm sampling procedure to the matrix $A_{\text{light}}$. Note that $\rho(\bB_{\text{light}}) \approx \rho(\bA_{\text{light}})$ and the number of rows in $\bB_{\text{light}}$ is $\Theta(\rho(A_{\text{light}}) \cdot \log d \cdot \varepsilon^{-2})$, and therefore $\Theta(\rho(\bB_{\text{light}}) \cdot \log d \cdot \varepsilon^{-2})$. Setting $\varepsilon = \alpha^2 / \log^{5/2} d$, we obtain that the number of rows in the matrix $\bB_{\text{light}}$ is $\Theta(\alpha^{-4} \cdot \rho(\bB_{\text{light}}) \cdot \log^6 d)$ and thus assuming $\varepsilon^2\beta^2 = \alpha^4\beta^2/\log^5 d \ge 1/\polylog(d)$, we can use Theorem~\ref{thm:bounded-norms} to obtain a vector $\hat{v}$ satisfying 
\begin{align*}
    \la \hat{v}, v_1'\ra^2 \ge 1 - 3\alpha.
\end{align*}
We will now show that $v_1'$ has a large correlation with $v_1$ which then implies $\hat{v}$ has a large correlation with $v_1$. Since $\opnorm{A_{\text{light}}} \ge \opnorm{A} - \opnorm{A_{\text{heavy}}} \ge \beta\opnorm{A}$,
$
\opnorm{A_{\text{light}}}^2 = \opnorm{A_{\text{light}} \cdot v_1'}^2 \ge \beta \opnorm{A}^2.
$
Consider the following upper bound on $\opnorm{A_{\text{light}} \cdot v_1'}^2$:
\begin{align*}
	\opnorm{A_{\text{light}}}^2 = \opnorm{A_{\text{light}} \cdot v_1'}^2 &= \opnorm{A_{\text{light}} \cdot (\la v_1', v_1\ra \cdot v_1 + (I - v_1\T{v_1})v_1')}^2\\
	& = \opnorm{\la v_1, v_1'\ra A_{\text{light}} \cdot v_1 + A_{\text{light}} (I-v_1\T{v_1})v_1'}^2\\
	&\le (1 + \theta) \cdot \la v_1, v_1'\ra^2 \cdot \opnorm{A_{\text{light}} \cdot v_1}^2 + (1 + 1/\theta) \cdot \opnorm{A_{\text{light}} (I - v_1\T{v_1})v_1'}^2
\end{align*}
for any $\theta > 0$. Using the fact that the rows of the matrix $A_{\text{light}}$ are a subset of the rows of the matrix $A$ and that $\opnorm{A(I-v_1\T{v_1})} = \sigma_2(A) = \sigma_1(A)/\sqrt{R}$, we have
\begin{align*}
	\opnorm{A_{\text{light}}}^2 &\le (1 + \theta) \cdot \la v_1, v_1'\ra^2 \cdot \opnorm{A_{\text{light}}}^2 + (1 + 1/\theta) \cdot \frac{\sigma_1^2}{R} \cdot (1 - \la v_1, v_1'\ra^2)\\
	&= \la v_1, v_1'\ra^2 ((1+\theta) \cdot \opnorm{A_{\text{light}}}^2 - (1 + 1/\theta)\sigma_1^2/R) + (1 + 1/\theta) \cdot \sigma_1^2/R
\end{align*}
which implies
\begin{align*}
	\la v_1, v_1'\ra^2 \ge \frac{\opnorm{A_{\text{light}}}^2 - (1+1/\theta)\cdot \sigma_1^2/R}{(1+\theta)\opnorm{A_{\text{light}}}^2 - (1 + 1/\theta)\sigma_1^2/R} &= 1 - \frac{\theta \cdot \opnorm{A_{\text{light}}}^2}{(1+\theta)\opnorm{A_{\text{light}}}^2 - (1 + 1/\theta)\sigma_1^2/R}\\
	&\ge 1 - \frac{\theta}{1 + \theta - (1 + 1/\theta)/R\beta}
\end{align*}
using the fact that $\opnorm{A_{\text{light}}}^2 \ge \beta^2 \sigma_1^2$. Now assuming $R\beta \ge 1$ and picking $\theta = 2/(R\beta - 1)$, we obtain
\begin{align*}
	\la v_1, v_1'\ra^2 \ge 1 - \frac{4R\beta}{(1+R\beta)^2} \ge 1 - \frac{4}{R\beta}.
\end{align*}
We therefore have
\begin{align}
    \la \hat{v}, v_1\ra^2 \ge 1 - \frac{4}{R\beta} - 4\alpha.
\end{align}
Setting $\beta = 1/\sqrt{R}$ and $\alpha = 1/\sqrt{R}$, we satisfy all the requirements assuming that $R \le \polylog(d)$ and obtain a vector $\hat{v}$ satisfying $\la \hat{v}, v_1\ra^2 \ge 1 - 8/\sqrt{R}$. When $\opnorm{A_{\text{heavy}}} \ge (1 - \beta)\opnorm{A}$, we already have a vector $v' = \text{top eigenvector of $A_{\text{heavy}}$}$ that satisfies $\la \hat{v}, v_1\ra^2 \ge 1 - 4\beta \ge 1 - 4/\sqrt{R}$. Thus, in both the cases, we obtain a vector $\hat{v}$ satisfying $\la \hat{v}, v_1\ra^2 \ge 1 - O(1/\sqrt{R})$.

The procedure described requires knowing the approximate values of $\frnorm{A}$, $\opnorm{A_{\text{light}}}$. Since, we assume that all the non-zero entries of the matrix have an absolute value at least $1/\poly(d)$ and at most $\poly(d)$, the values $\frnorm{A}, \opnorm{A_{\text{light}}}$ lie in the interval $[1/\poly(d), \poly(nd)]$. Hence, using $O(\log nd)$ guesses each for $\frnorm{A}$ and $\opnorm{A_{\text{light}}}$ and using a Gaussian sketch of $A$ similar to that in Algorithm~\ref{alg:bounded-norms}, we can obtain a vector satisfying the guarantees in the theorem.
\end{proof}

\section{Lower Bounds}
Our algorithm uses $\tilde{O}(h \cdot d)$ space when the number of heavy rows in the stream is $h$. We want to argue that it is nearly tight. We show the following theorem.
\begin{theorem}
    Given a dimension $d$, let $h$ and $R$ be arbitrary with $R \le h \le d$ and $R^2 \cdot h = O(d)$. Consider an algorithm $\calA$ with the following property:
    \begin{center}
    Given any fixed matrix $n \times d$ matrix $A$ with $O(h)$ heavy rows and gap $\sigma_1(A)^2/\sigma_2(A)^2 \ge R$, in the form of a uniform random order stream, the algorithm $\calA$ outputs a unit vector $\hat{v}$ such that, with probability $\ge 1 - (1/2)^{4R+4}$ over the randomness of the stream and the internal randomness of the algorithm,   
        $|\la \hat{v}, v_1\ra|^2 \ge 1 - c/R^2.$
    \end{center}
If $c$ is a small enough constant, then the algorithm $\calA$ must use $\Omega(h \cdot d/R)$ bits of space.
\label{thm:lowerbound-theorem-formal}
\end{theorem}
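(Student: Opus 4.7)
The plan is to reduce from a multi-round Augmented-Indexing problem, extending the $\Omega(d^2/R^3)$ arbitrary-order lower bound of \cite{price2023spectral} to be parameterized by the number $h$ of heavy rows. I would construct a distribution over inputs $A(x)$ indexed by a string $x \in \{0,1\}^{\Omega(hd/R)}$ such that any streaming algorithm satisfying the hypothesis of the theorem must effectively retain $\Omega(hd/R)$ bits of information about $x$ in its memory.

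First I would fix a set $\calC$ of $|\calC| = 2^{\Omega(d/R)}$ unit vectors in $\R^k$ with $k = \Theta(d/h)$ whose pairwise inner products are at most $1 - c_0/R^2$; a uniformly random code of this size achieves this by standard Chernoff-plus-union-bound arguments, and this is the regime where the constraint $R^2 h = O(d)$ in the theorem is binding. I would partition the $d$ coordinates into $h$ orthogonal blocks $V_1, \ldots, V_h$ of size $k$ each, and for $x = (x_1, \ldots, x_h) \in \calC^h$ build $A(x)$ by placing one heavy row $a_i = \mu \cdot x_i$ supported on $V_i$ for each $i \in [h]$, padding the stream with $n - h$ i.i.d.\ isotropic light rows whose scale is tuned so that (i) $\sigma_2(A)$ is dominated by the light spectrum, (ii) the top singular direction equals $x_{i^\ast}$ for a single \emph{winning} block $i^\ast$ that can be identified from row norms alone, and (iii) the gap $\sigma_1(A)^2/\sigma_2(A)^2 \ge R$. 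Because the light rows are i.i.d., their positions in a uniformly random permutation are exchangeable, so once one conditions on the heavy-row positions the light portion of a random-order stream is distributionally identical to an adversarially-ordered one.

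Given any $\hat v$ with $\la \hat v, v_1\ra^2 \ge 1 - c/R^2$, we have $\la \hat v, x_{i^\ast}\ra^2 \ge 1 - c/R^2$, and since codewords in $\calC$ are $(1-c_0/R^2)$-separated, rounding $\hat v$ to the nearest codeword uniquely recovers $x_{i^\ast}$ as long as $c < c_0/2$; this extracts $\Omega(d/R)$ bits about $x_{i^\ast}$. To harvest the $h$ factor I would iterate this in the style of Augmented Indexing across the $h$ blocks: after the algorithm has processed the first $j-1$ heavy rows, the reduction reveals the suffix of light rows together with $x_{j+1}, \ldots, x_h$, rescales the remaining signal so that block $j$ becomes the winner, and queries the algorithm for $\hat v$. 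The exponentially small failure probability $(1/2)^{4R+4}$ in the hypothesis is exactly what makes the union bound feasible: it covers all $h$ rounds simultaneously along with the $O(R)$ auxiliary events per round (light-spectrum concentration, code separation, regularity of the random permutation). A standard chain-rule entropy argument then converts these $h$ successful recoveries into an $\Omega(hd/R)$-bit lower bound on the memory state.

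The main obstacle is threading the random-order requirement through the multi-round Augmented-Indexing reduction, since unlike in \cite{price2023spectral} the adversary does not get to schedule the heavy rows at convenient times. I plan to handle this by exploiting exchangeability of the light rows: conditioned on the (uniformly random) positions of the $h$ heavy rows, the remaining stream has a product distribution, so the $j$-th round of the reduction can be simulated by having the adversary reveal the suffix of light rows (independent of the algorithm's state) together with $x_{j+1}, \ldots, x_h$. The delicate quantitative step will be tuning the heavy-row norm $\mu$, the code rate $\log|\calC|$, and the noise scale simultaneously so that the gap stays $\ge R$, each block admits $\Omega(d/R)$ bits of encoding, and the $(1-c/R^2)$-correlation guarantee still suffices to decode — this is precisely where the constraint $R^2 h = O(d)$ enters.
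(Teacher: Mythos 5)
Your proposal has the right target intuition --- the algorithm's mid-stream state must carry $\Omega(d/R)$ bits about each of the $\Theta(h)$ "candidate winner" vectors it has already seen --- but the mechanism you propose to extract the $h$ factor does not survive the parameter regime the theorem allows. You plan a multi-round Augmented-Indexing reduction that queries the algorithm $h$ times (once per block $j$, rescaling so that block $j$ becomes the winner), and you justify the ensuing union bound by pointing at the exponentially small failure probability $(1/2)^{4R+4}$. That budget is exponentially small in $R$, not in $h$. The theorem's constraints are $R \le h \le d$ and $R^2 h = O(d)$, which permit, e.g., $R = \Theta(1)$ and $h = \Theta(d)$; in that regime $(1/2)^{4R+4}$ is a fixed constant while you need failure probability $o(1/h) = o(1/d)$ for $h$ independent decodings to succeed simultaneously. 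So the union bound over $h$ rounds simply does not close. A secondary issue: you also need to \emph{condition} on a favorable random-order event (the relevant heavy row appearing before the "round-$j$ query point," the winning suffix landing after it), and such events already cost you roughly $1/\mathrm{poly}(h)$ in probability, again exceeding your $(1/2)^{4R+4}$ budget when $h$ is large.

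The paper's proof sidesteps this by never iterating. It keeps the Price--Xun instance ($h$ random sign vectors $\bx_1,\ldots,\bx_h$, a hidden index $\bi$, and $k = 4R$ vectors $\by_j$ sharing a $(1-\gamma)d$-coordinate prefix with $\bx_{\bi}$) and conditions on a \emph{single} event $\calE$: that $\bx_{\bi}$ lands in the first $h/2$ positions of the permuted stream while all the $\by_j$ land after position $h/2$. This event has probability $\ge (1/2)^{k+1} = (1/2)^{4R+1}$, so the hypothesis's failure probability $(1/2)^{4R+4}$ is exactly what guarantees the algorithm still succeeds with constant probability \emph{conditioned on $\calE$} --- the exponent in the failure bound is tied to $k=4R$, not to $h$. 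The $h$ factor is then harvested purely information-theoretically: since $\bi$ is uniform over the $h/2$ heavy rows seen in the first half, the data-processing inequality puts $\Omega(d/R)$ bits of information about $\bx_{\bi}[(1-\gamma)d+1{:}d]$ into the mid-stream state $\bs_{\mathrm{mid}}$, and then a lemma of the form $\sum_{i=1}^{\ell} I(\bX;\bY_i) \ge \ell\,(I(\bX;\bY_{\bi}) - \log_2\ell)$ together with additivity of mutual information over the independent $\bx_i$ gives $H(\bs_{\mathrm{mid}}) \ge \Omega(hd/R) - h\log_2 h$, with the $h\log h$ slack absorbed by the $R^2 h = O(d)$ assumption. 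You would need to replace your $h$-round query schedule with a one-shot argument of this information-theoretic flavor (the rest of your setup --- code separation, decoding from a $(1-c/R^2)$-correlated $\hat v$, the role of $R^2 h = O(d)$ --- is otherwise reasonable and broadly parallel to the paper's).
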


The theorem shows that a streaming algorithm must use $\Omega(hd/R)$ bits of space assuming that with high probability, it outputs a vector with a large enough correlation with the top eigenvector of $\T{A}A$ when the rows are given in a random order stream. Our proof uses the same lower bound instance as that of \cite{price2023spectral}. The key difference from their proof is that our lower bound must hold against random order streams.
\begin{proof}
For each $i \in [h]$, let $\bx_1, \ldots, \bx_h$ be drawn independently and uniformly at random from $\set{+1, -1}^d$. Let $\bi \sim [h]$ be drawn uniformly at random, and for an integer $k$ to be chosen later, let $\by_1, \ldots, \by_k \in \R^d$ be vectors that share the first $(1-\gamma)d$ coordinates with the vector $\bx_{\bi}$. Each of the last $\gamma \cdot d$ elements of each of $\by_1, \ldots, \by_k$ are sampled uniformly at random from the set $\set{+1, -1}$. Define $\bz_1, \ldots, \bz_{h+k}$ such that for $j \le h$, $\bz_j = \bx_j$ and for $j > h$, let $\bz_j = \by_{j - h}$.

Now consider the stream $\bz_1, \ldots, \bz_{h+k}$. Price and Xun argue that when $k \ge 4R$, the gap of this stream is at least $R$ with large probability over the randomness used in the construction of the stream. Let $\bpi : [h+k] \rightarrow [h+k]$ be a uniformly random permutation independent of $\bi$. Consider the following event $\calE$:
\begin{center}
	$\bpi(\bi) \le h/2$ and $\bpi(h+1), \ldots, \bpi(h+k) > h/2$.
\end{center} 
We have that the probability of the event $\calE$ is
\begin{align*}
	\frac{h/2 + k}{h + k} \cdot \frac{h/2 + k - 1}{h + k - 1} \cdots \frac{h/2 + 1}{h + 1} \cdot \frac{h/2}{h} \ge (1/2)^{k+1}.
\end{align*} 
Let $S_{\bi}$ be the set of permutations $\pi$ that satisfy the above event. Therefore we have $\Pr_{\bpi}[\bpi \in S_{\bi}] \ge (1/2)^{k+1}$. If the probability of failure, $\delta$, of the algorithm $\calA$ satisfies $\delta \le (1/2)^{k+4}$, we have that
\begin{align*}
	\Pr_{\bpi,\ \text{internal randomness}}[\text{$\calA$ succeeds on $\bz_{\bpi(1)}, \ldots, \bz_{\bpi(h+k)}$} \mid \bpi \in S_{\bi}]\ \ge \frac{3}{4}.
\end{align*}
Let $\bs_{\text{mid}}$ be the state of the algorithm after $h/2$ steps and $\bs_{\text{fin}}$ be the final state of the algorithm. The randomness in $\bs_{\text{fin}}$ is from the following sources: (i) randomness of the vectors $\bx_1, \ldots, \bx_h$, (ii) the index $\bi \in [h]$, (iii) the vectors $\by_1, \ldots, \by_k$, (iv) the permutation $\bpi$, and (v) the internal randomness of the algorithm. From here on, condition on the event $\calE$, i.e., that the permutation $\bpi \in S_{\bi}$. We will not explicitly mention that all entropy and information terms in the proof are conditioned on $\calE$. Since $\bpi(\bi) \le h/2$, we have
\begin{align*}
    \bs_{\text{fin}}\text{ is conditionally independent of }x_{\bi}[(1-\gamma)\cdot d + 1: d] \text{ given $\bs_{\text{mid}}$}.
\end{align*}
Using the data processing inequality, we obtain that
\begin{align*}
	I(\bs_{\text{mid}}; \bx_{\bi}[(1-\gamma)\cdot d + 1:d]) \ge I(\bs_{\text{fin}}; \bx_{\bi}[(1-\gamma)\cdot d + 1: d]).
\end{align*}
When $h \le cd/R^2$, $k = 4R$, $\gamma = 1/4$  and $\varepsilon \le c/k^2$ for a small constant, we have as in the proof of Theorem~1.5 in \cite{price2023spectral} that,
\begin{align*}
	I(\bs_{\text{fin}}; \bx_{\bi}[(1-\gamma) \cdot d+1 : d]) \ge \Omega(d/R)
\end{align*}
which now implies
\begin{align*}
	I(\bs_{\text{mid}}; \bx_{\bi}[(1-\gamma) \cdot d+1 : d]) \ge \Omega(d/R).
\end{align*}
Note that conditioned on the event $\calE$, the distribution of $\bi$ is uniform over $\set{\bpi^{-1}(1), \ldots, \bpi^{-1}(h/2)}$. We now prove the following lemma:
\begin{lemma}
    Let $\bY_1, \ldots, \bY_{\ell}$ be independent random variables. Let $\bi \sim [\ell]$ be a uniform random variable independent of $\bX$. We have
    \begin{align*}
        I(\bX\, ;\, \bY_1) + \cdots + I(\bX\, ;\, \bY_{\ell}) \ge \ell \cdot  (I(\bX; \bY_{\bi}) - \log_2 \ell).
    \end{align*}
\end{lemma}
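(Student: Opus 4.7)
The plan is to evaluate $I(\bX; \bY_{\bi}, \bi)$ in two different orders via the chain rule for mutual information and compare the results. First I would expand by conditioning on $\bi$ first. Since $\bi$ is independent of $\bX$, the term $I(\bX; \bi) = 0$; and since $\bi$ is also independent of $(\bY_1,\ldots,\bY_\ell)$ (implicitly required by the setup) and uniform on $[\ell]$, conditioning on $\bi = i$ leaves the joint law of $(\bX, \bY_i)$ unchanged, giving
\[
  I(\bX; \bY_{\bi}, \bi) \;=\; I(\bX; \bi) + I(\bX; \bY_{\bi} \mid \bi) \;=\; \frac{1}{\ell}\sum_{i=1}^{\ell} I(\bX; \bY_i).
\]

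Second I would expand in the opposite order as $I(\bX; \bY_{\bi}, \bi) = I(\bX; \bY_{\bi}) + I(\bX; \bi \mid \bY_{\bi})$. The nonnegativity of the second term already yields $\tfrac{1}{\ell}\sum_i I(\bX; \bY_i) \ge I(\bX; \bY_{\bi})$; more loosely, $I(\bX; \bi \mid \bY_{\bi}) \le H(\bi) = \log_2 \ell$ recovers the precise slack of $\log_2 \ell$ in the stated inequality. Multiplying through by $\ell$ then produces $\sum_{i=1}^{\ell} I(\bX; \bY_i) \ge \ell \cdot (I(\bX; \bY_{\bi}) - \log_2 \ell)$, as claimed (and in fact the stronger form without the $\log_2\ell$ term holds as well).

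The argument is a routine double application of the chain rule, so I do not anticipate any real obstacle. The only point worth flagging is that the hypothesis implicitly needs $\bi$ to be independent of $(\bY_1, \ldots, \bY_\ell)$ and not merely of $\bX$, so that $I(\bX; \bY_i \mid \bi = i)$ collapses to $I(\bX; \bY_i)$ in the first step. The independence of the $\bY_j$'s themselves is not used inside the lemma; it enters only in the subsequent step of the space lower bound, when this inequality is combined with $H(\bs_{\text{mid}}) \ge \sum_j I(\bs_{\text{mid}}; \bY_j)$ to deduce the $\Omega(hd/R)$ bound.
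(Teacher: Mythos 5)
Your argument is correct and is built from the same ingredients as the paper's proof (entropy/chain-rule decompositions plus the independence of $\bi$), but your bookkeeping is tighter. The paper writes $I(\bX;\bY_{\bi}) = H(\bY_{\bi}) - H(\bY_{\bi}\mid \bX)$, bounds $H(\bY_{\bi}) \le H(\bi) + H(\bY_{\bi}\mid\bi)$ and $H(\bY_{\bi}\mid\bX) \ge H(\bY_{\bi}\mid\bi,\bX)$, and then averages; discarding the $H(\bi)$ term is precisely where its $\log_2\ell$ slack enters. Your double chain-rule expansion of $I(\bX;\bY_{\bi},\bi)$ instead gives the identity $\frac{1}{\ell}\sum_i I(\bX;\bY_i) = I(\bX;\bY_{\bi}) + I(\bX;\bi\mid\bY_{\bi})$, and nonnegativity of the last term yields the stronger conclusion $\sum_i I(\bX;\bY_i) \ge \ell\cdot I(\bX;\bY_{\bi})$, which implies the stated lemma (only the weaker form is needed downstream). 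One small correction: your aside that $I(\bX;\bi\mid\bY_{\bi}) \le H(\bi) = \log_2\ell$ ``recovers the precise slack'' points the wrong way --- that estimate upper-bounds $\frac{1}{\ell}\sum_i I(\bX;\bY_i)$ by $I(\bX;\bY_{\bi}) + \log_2\ell$ rather than lower-bounding it; the stated inequality follows solely from $I(\bX;\bi\mid\bY_{\bi}) \ge 0$. Your flag about the hypothesis is well taken and applies equally to the paper: the steps $H(\bY_{\bi}\mid\bi) = \frac{1}{\ell}\sum_i H(\bY_i)$ and $H(\bY_{\bi}\mid\bi,\bX) = \frac{1}{\ell}\sum_i H(\bY_i\mid\bX)$ in the paper's proof, like your first expansion, implicitly use that $\bi$ is independent of each pair $(\bX,\bY_i)$, not merely of $\bX$. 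You are also right that mutual independence of $\bY_1,\ldots,\bY_\ell$ plays no role inside this lemma; the paper uses it only in the subsequent superadditivity lemma.
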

\begin{proof}
By definition, we have
\begin{align*}
    I(\bX\, ; \bY_{\bi}) = H(\bY_{\bi}) - H(\bY_{\bi} \mid \bX).
\end{align*}
Now, we note that $H(\bY_{\bi}) \le H(\bY_{\bi}, \bi) = H(\bi) + H(\bY_{\bi} \mid \bi) = \log_2 \ell + \frac{H(\bY_1) + \cdots + H(\bY_{\ell})}{\ell}$. We now lower bound $H(\bY_{\bi} \mid \bX)$. Since conditioning always decreases entropy, we obtain
\begin{align*}
    H(\bY_{\bi} \mid \bX) \ge H(\bY_{\bi} \mid \bi, \bX).
\end{align*}
As $\bX$ is independent of $\bi$, we have
\begin{align*}
    H(\bY_{\bi} \mid \bX) \ge H(\bY_{\bi} \mid \bi, \bX) = \frac{H(\bY_1 \mid \bX) + \cdots + H(\bY_{\ell} \mid \bX)}{\ell}
\end{align*}
which then implies
\begin{align*}
    I(\bX\, ;\, \bY_{\bi}) &\le H(\bi) + \frac{H(\bY_1) + \cdots + H(\bY_{\ell})}{\ell} - \frac{H(\bY_1 \mid \bX) + \cdots + H(\bY_{\ell} \mid \bX)}{\ell}\\
    &\le H(\bi) + \frac{I(\bX\,;\,\bY_1) + \cdots + I(\bX\,;\, \bY_{\ell})}{\ell}.
\end{align*}
Since $H(\bi) = \log_2 \ell$, we have the proof.
\end{proof}
Using this lemma, 
\begin{align*}
 &I(\bs_{\text{mid}}; \bx_{\bpi^{-1}(1)}[(1-\gamma) \cdot d + 1:d]) + \cdots + I(\bs_{\text{mid}}; \bx_{\bpi^{-1}(h/2)}[(1-\gamma) \cdot d + 1:d])\\	
 &= (h/2) \cdot I(\bs_{\text{mid}}; \bx_{\bi}[(1-\gamma) \cdot d + 1 : d] - \log_2 (h/2))\\
 &\ge \Omega(hd/R) - h\log_2 h.
\end{align*}
\begin{lemma}
	If $\bX, \bY$ are independent, then $I(\bZ\, ;\, (\bX, \bY)) \ge I(\bZ\, ;\, \bX) + I(\bZ\, ;\, \bY)$.
\end{lemma}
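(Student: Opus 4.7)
The plan is to prove this via the chain rule for mutual information combined with the fact that independence of $\bX$ and $\bY$ forces $H(\bY \mid \bX) = H(\bY)$, and then conclude using monotonicity of entropy under additional conditioning.

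First I would apply the chain rule:
\begin{align*}
    I(\bZ\,;\,(\bX, \bY)) = I(\bZ\,;\,\bX) + I(\bZ\,;\,\bY \mid \bX).
\end{align*}
So it suffices to show that $I(\bZ\,;\,\bY \mid \bX) \ge I(\bZ\,;\,\bY)$.

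Next I would expand both sides as differences of entropies. We have $I(\bZ\,;\,\bY \mid \bX) = H(\bY \mid \bX) - H(\bY \mid \bX, \bZ)$ and $I(\bZ\,;\,\bY) = H(\bY) - H(\bY \mid \bZ)$. By the independence of $\bX$ and $\bY$, $H(\bY \mid \bX) = H(\bY)$, so the desired inequality reduces to $H(\bY \mid \bZ) \ge H(\bY \mid \bX, \bZ)$, which is the standard fact that conditioning on additional information cannot increase entropy. Combining with the chain rule gives the claim.

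There is essentially no substantive obstacle — the statement is a textbook information-theoretic inequality. The only point worth emphasizing is that the independence hypothesis is used in exactly one place, namely to replace $H(\bY \mid \bX)$ with $H(\bY)$; without it the inequality can fail (e.g., take $\bY = \bX = \bZ$, in which case $I(\bZ\,;\,(\bX,\bY)) = H(\bX)$ while $I(\bZ\,;\,\bX) + I(\bZ\,;\,\bY) = 2H(\bX)$). In the application that follows in the paper this lemma will presumably be chained over the coordinates $\bx_{\bpi^{-1}(j)}[(1-\gamma)d+1:d]$ for $j = 1, \ldots, h/2$, which are independent by construction of the hard instance, in order to turn the per-coordinate information lower bound obtained from the previous lemma into an $\Omega(hd/R)$ bound on $I(\bs_{\text{mid}}\,;\, \text{all heavy-row suffixes})$, and hence on the space used by $\calA$.
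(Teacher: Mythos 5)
Your proof is correct. The paper's own argument takes a slightly different but equally elementary route: instead of the chain rule for mutual information, it expands $I(\bZ;(\bX,\bY)) = H((\bX,\bY)) - H((\bX,\bY)\mid\bZ)$, uses independence to split $H((\bX,\bY)) = H(\bX) + H(\bY)$, and then applies subadditivity of conditional joint entropy, $H((\bX,\bY)\mid\bZ) \le H(\bX\mid\bZ) + H(\bY\mid\bZ)$, to conclude. Your version routes through $I(\bZ;(\bX,\bY)) = I(\bZ;\bX) + I(\bZ;\bY\mid\bX)$ and then shows $I(\bZ;\bY\mid\bX) \ge I(\bZ;\bY)$ by replacing $H(\bY\mid\bX)$ with $H(\bY)$ (independence) and using $H(\bY\mid\bZ) \ge H(\bY\mid\bX,\bZ)$ (conditioning reduces entropy). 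The two arguments are essentially dual formulations of the same calculation: each uses the independence hypothesis exactly once to decouple $\bX$ from $\bY$ at the unconditional level, and each then invokes one instance of the fact that conditional mutual information is nonnegative. Neither buys anything the other does not; both are one-line textbook derivations. Your remark about where the lemma is used downstream (chaining over the independent heavy-row suffixes to lift the per-index bound to $\Omega(hd/R)$) matches the paper's application.
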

\begin{proof}
	\begin{align*}
		I(\bZ\, ;\, (\bX, \bY)) &= H((\bX, \bY)) - H((\bX, \bY) \mid \bZ)\\
		&= H(\bX) + H(\bY) - H((\bX, \bY) \mid \bZ).
	\end{align*}
	Now, we note that for any three random variables $\bX, \bY, \bZ$, we have $H((\bX, \bY) \mid \bZ) \le H(\bX \mid \bZ) + H(\bY \mid \bZ)$ which proves the lemma.
\end{proof}
Using the independence of $\bx_1, \ldots, \bx_h$ conditioned on the event $\calE$, we obtain
\begin{align*}
	I(\bs_{\text{mid}};(\bx_{\bpi^{-1}(1)}[(1-\gamma) \cdot d + 1:d],\ldots,\bx_{\bpi^{-1}(h/2)}[(1-\gamma) \cdot d + 1:d])) \ge \Omega(hd/R) - h \log_2 h
\end{align*}
which then implies
\begin{align*}
	H(\bs_{\text{mid}}) \ge \Omega(hd/R)
\end{align*}
using the fact that $R^2 \cdot h = O(d)$.
Finally, we have $\max |\bs_{\text{mid}}| \ge \Omega(hd/R)$. Here $|\bs_{\text{mid}}|$ is the number of bits used in the representation of the state $\bs_{\text{mid}}$.
\end{proof}
\section{Improving the Gap Requirements in the Algorithm of Price and Xun}
\subsection{Arbitrary Order Streams}
As discussed in Section~\ref{sec:sampling-for-row-reduction}, we can guess an approximation of $\opnorm{A}^2$ in powers of $2$ and sample at most $O(d\log d/\varepsilon^2)$ rows in the stream to obtain a matrix $\bB$, in the form of a stream, satisfying
$
    \opnorm{\T{\bB}\bB - \T{A}A} \le \varepsilon\opnorm{A}^2,
$
with a large probability.
Using Weyl's inequalities, we obtain that
\begin{align*}
    \sigma_2(\T{\bB}\bB) \le \sigma_2(\T{A}A) + \varepsilon\opnorm{A}^2
\quad
\text{and}
\quad
    \sigma_1(\T{\bB}\bB) \ge (1-\varepsilon)\sigma_1(\T{A}A)
\end{align*}
implying $R' = \sigma_1(\bB)^2/\sigma_2(\bB)^2 \ge (1-\varepsilon)/(1/R + \varepsilon)$. For $\varepsilon = 1/(2R) \le 1/2$, we note $R' \ge R/3$. Let $n' = O(R^2 \cdot d\log d)$ be the number of rows in the matrix $\bB$ and note that $R' = \Omega(\log n' \cdot \log d)$ assuming $R = \Omega(\log^2 d)$. Hence, running the algorithm of Price and Xun on the rows of the matrix $\bB$, we compute a vector $\hat{v}$ for which
\begin{align*}
    |\la \hat{v}, v_1'\ra|^2 \ge 1 - \frac{\log d}{CR'} - \frac{1}{\poly(d)}
\end{align*}
with a large probability, where $v_1'$ is the top eigenvector of the matrix $\T{\bB}\bB$. We now note that if $v_1$ denotes the top eigenvector of the matrix $\T{A}A$, then $|\la v_1, v_1'\ra|^2 \ge  1 - O(1/R)$ which therefore implies that with a large probability,
\begin{align*}
    |\la \hat{v}, v_1\ra|^2 \ge 1 - \frac{\log d}{CR}.
\end{align*}
Thus, sub-sampling the stream using row norm sampling and then running the algorithm of \cite{price2023spectral}, we obtain an algorithm for arbitrary order streams with a gap $R = \Omega(\log^2 d)$.
\subsection{Random Order Streams}
Lemma 3.5 in \cite{price2023spectral} can be tightened when the rows of the stream are uniformly randomly ordered. Specifically, we want to bound the following quantity:
\begin{align*}
	\sum_{i=1}^n \la a_i, P\hat{v}_{i-1}\ra^2
\end{align*}
where $P = I - v_1\T{v_1}$ denotes the projection away from the top eigenvector, and $\hat{v}_{i-1}$ is a function of $v_1, a_1, \ldots, a_{i-1}$. We have
\begin{align*}
	\E[\la a_i, P\hat{v}_{i-1}\ra^2] = \E[\E[\la a_i, P\hat{v}_{i-1}\ra^2 \mid a_1, \ldots, a_{i-1}]].
\end{align*}
Given that the first $i-1$ rows are $a_1, \ldots, a_{i-1}$, assuming uniform random order, we have 
\begin{align*}
\E[\la a_i, P\hat{v}_{i-1}\ra^2 \mid a_1, \ldots, a_{i-1}] &= \frac{1}{n-i + 1}\T{\hat{v}_{i-1}}P(\T{A}A - a_1\T{a_1} - \cdots - a_{i-1}\T{a_{i-1}})P\hat{v}_{i-1}\\
&\le \frac{\sigma_2(A)^2}{n-i+1}.
\end{align*}
Hence $\E[\la a_i, P\hat{v}_{i-1}\ra^2] \le \sigma_2(A)^2/(n-i+1)$ and
$
	\E[\sum_{i=1}^n \la a_i, P\hat{v}_{i-1}\ra^2] \le \sigma_2(A)^2(1 + \log n).
$
Price and Xun define $\eta \cdot \sigma_2(A)^2$ as $\sigma_2$ and in that notation, we obtain $\eta\sum_{i=1}^n \la a_i, P\hat{v}_{i-1}\ra^2 \le 10\sigma_2(1+\log n)$ with probability $\ge 9/10$ by Markov's inequality. In the proof of Lemma~3.6 in \cite{price2023spectral}, if $\sigma_1/\sigma_2 \ge 20(1 + \log_2 n)$, we obtain $\log \opnorm{v_n} \gtrsim \sigma_1$. Now, $\sigma_1 \ge O(\log d)$ ensures that the Proof of Theorem~1.1 in their work goes through.

Using the row-norm sampling analysis from the previous section, we can assume $n = \poly(d)$ and therefore a gap of $O(\log d)$ between the top two eigenvalues of $\T{A}A$ is enough for Oja's algorithm to output a vector with a large correlation with the top eigenvector in random order streams.
\section{Hard Instance for Oja's Algorithm}
At a high level, the algorithm of \cite{price2023spectral} runs Oja's algorithm with different learning rates $\eta$ and in the event that the norm of the output vector with each of the learning rates $\eta$ is small, then the row with the largest norm is output. The algorithm is simple and can be implemented using an overall space of $O(d \cdot \polylog(d))$ bits. 

The algorithm initializes $z_0 = \bg$ where $\bg$ is a random Gaussian vector. The algorithm streams through the rows $a_1, \ldots, a_n$ and performs the following operation
\begin{align*}
    z_i \gets z_{i-1} + \eta \cdot \la z_{i-1}, a_i\ra a_i.
\end{align*}
The algorithm computes the smallest learning rate $\eta$ when $\opnorm{z_n}$ is large enough, and then outputs either $z_n/\opnorm{z_n}$ or $\bar{a}/\opnorm{\bar{a}}$ as an approximation to the eigenvector of the matrix $\T{A}A$. Here $\bar{a}$ denotes the row in $A$ with the largest Euclidean norm.

The following theorem shows that at gaps $\le O(\log d/\log\log d)$, we cannot use Oja's algorithm with a fixed learning rate $\eta$ to obtain constant correlation with the top eigenvector.

\begin{theorem}
    Given dimension $d$, a constant $c > 0$, a parameter $M$, for all gap parameters $R = O_c(\log d/\log\log d)$ there is a stream of vectors $a_1, \ldots, a_n \in \R^d$ with $n = O(R + M)$ such that:
    \begin{enumerate}
        \item $\sigma_1(A)^2/\sigma_2(A)^2 \ge R/2$, and
        \item Oja's algorithm with any learning rate $\eta < M$ fails to output a unit vector $\hat{v}$ that satisfies, with probability $\ge 9/10$,
        \begin{align*}
            |\la \hat{v}, v_1\ra| \ge c
        \end{align*}
        where $v_1$ is the top eigenvector of the matrix $\T{A}A$.
    \end{enumerate}
    Moreover, the result holds irrespective of the order in which the vectors $a_1, \ldots, a_n$ are presented to the Oja's algorithm. We will additionally show that even keeping track of the largest norm vector is insufficient to output a vector that has a large correlation with $v_1$.
    \label{thm:ojas-lowerbound-formal}
\end{theorem}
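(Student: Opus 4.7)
The plan is to exhibit an explicit short stream ($n = O(R+M)$) of pairwise orthogonal coordinate-axis vectors and to verify, case-by-case in the learning rate $\eta < M$, that neither branch of the algorithm of \cite{price2023spectral} is correlated with $v_1$. Orthogonality makes Oja's updates commute, so the ``irrespective of order'' clause holds automatically, and the whole argument reduces to explicit coordinate-wise algebra combined with Gaussian concentration on the initialization.

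\textbf{Construction.} Set $v_1 = e_1$. Place $m = \Theta(R)$ copies of a single ``light'' vector $(\sigma_1/\sqrt{m})\, e_1$, so the $e_1$ direction accumulates mass $\sigma_1^2$ while each individual $e_1$ row has norm $\sigma_1/\sqrt{m}$. For each of $K = O(M)$ orthogonal distractor directions $e_2,\dots,e_{K+1}$, insert one distractor row whose norm strictly exceeds $\sigma_1/\sqrt{m}$ (so the maximum-norm row is a distractor, killing the max-row branch with correlation $0$) and whose squared norm is at most $\sigma_1^2/R$ (so $\sigma_1(A)^2/\sigma_2(A)^2 \ge R/2$). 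The distractor squared norms are chosen on a dyadic grid over $[\sigma_1^2/(mR),\ \sigma_1^2/R]$ so that, for every $\eta$ of interest, at least one distractor $a_{k^\star}$ has $\eta\opnorm{a_{k^\star}}^2$ lying in a fixed constant-sized window. With $K = O(M)$ distractor rows this gives $n = m + K = O(R+M)$ as required.

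\textbf{Oja's dynamics.} By commutativity, for any fixed $\eta < M$ the iterate has the closed form $z_n[1] = (1+\eta\sigma_1^2/m)^m\,\bg_1$, $z_n[k+1] = (1+\eta\opnorm{a_k}^2)\,\bg_{k+1}$ for $k\in\{1,\dots,K\}$, and $z_n[j] = \bg_j$ for the $d-K-1$ untouched coordinates. Let $E(\eta) = (1+\eta\sigma_1^2/m)^m$. If $E(\eta) \le c\sqrt{d}$, then with probability $\ge 9/10$ the untouched coordinates contribute $\sum_{j>K+1}\bg_j^2 = \Theta(d)$ to $\opnorm{z_n}^2$, which dominates $z_n[1]^2$, so $|z_n[1]|/\opnorm{z_n} < c$. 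If $E(\eta) > c\sqrt{d}$, then by construction the matching distractor $a_{k^\star}$ has $1+\eta\opnorm{a_{k^\star}}^2$ within a constant factor of $E(\eta)$, and Gaussian concentration on the independent $\bg_{k^\star+1}$ gives $z_n[k^\star+1]^2 = \Omega(z_n[1]^2)$ with probability $\ge 9/10$, so again $|z_n[1]|/\opnorm{z_n} < c$.

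\textbf{Main obstacle.} The delicate ingredient is the matching step in the large-$E(\eta)$ regime: one needs a dyadic family of $O(M)$ squared norms in the narrow interval $[\sigma_1^2/(mR),\ \sigma_1^2/R]$ that tracks the polynomially growing envelope $E(\eta)$ with only linearly growing per-distractor envelopes $1+\eta\opnorm{a_k}^2$. The hypothesis $R = O_c(\log d/\log\log d)$ is exactly what makes this achievable: it forces $d^{1/(2R)} = (\log d)^{O(1)}$, so the $\eta$-window where $E(\eta)$ crosses $c\sqrt{d}$ has only polylogarithmic width, which the $O(M)$ distractor budget comfortably covers. The care point is the boundary where $\eta\sigma_1^2/m \approx 1$ and $(1+x)^m$ transitions from $e^{mx}$ to $x^m$ behaviour; this is where the dependence of the constant hidden in $R = O_c(\cdot)$ on the target correlation $c$ enters.
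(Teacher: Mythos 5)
The general blueprint (orthogonal coordinate vectors, case analysis in $\eta$, Gaussian initialization providing $\Theta(\sqrt d)$ noise in untouched coordinates) matches the paper, and you correctly identify the max-norm branch killer and the role of $R=O(\log d/\log\log d)$. But there is a genuine gap exactly where you flag the ``main obstacle,'' and the resolution you sketch does not close it.

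The issue is growth rate. Each of your distractors appears once, so its coordinate in Oja's iterate is scaled by $1+\eta\opnorm{a_k}^2$, which is \emph{linear} in $\eta$ and bounded by $1+\eta\sigma_1^2/R$. The $e_1$ coordinate, by contrast, is scaled by $E(\eta)=(1+\eta\sigma_1^2/m)^m$, a degree-$m$ polynomial in $\eta$. At the threshold $\eta^*$ where $E(\eta^*)\approx c\sqrt d$ first holds, one finds $\eta^*\approx (m/\sigma_1^2)\bigl((c\sqrt d)^{1/m}-1\bigr)$, and since $m=\Theta(R)=O(\log d/\log\log d)$ this gives $\eta^*\sigma_1^2/R=\mathrm{polylog}(d)$. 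Hence $1+\eta^*\opnorm{a_k}^2\le 1+\eta^*\sigma_1^2/R=\mathrm{polylog}(d)$, which is nowhere near $c\sqrt d$, no matter how the dyadic grid is laid out. No choice of single-copy distractor norms in $[\sigma_1^2/(mR),\sigma_1^2/R]$ can make $1+\eta\opnorm{a_{k^\star}}^2$ within a constant factor of $E(\eta)$ in the regime $E(\eta)\gtrsim\sqrt d$, because linear functions cannot track a polynomial of degree $m$ across this window; the problem only gets worse as $\eta$ grows further.

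The paper's construction fixes exactly this by using a \emph{single} distractor direction $e_3$ but placing $\alpha=2M$ \emph{copies} of the small vector $(1/\sqrt{\alpha R})e_3$ there. That coordinate is then scaled by $(1+\eta/(R\alpha))^\alpha$, which for $\eta\le R\alpha/2$ (i.e.\ $\eta < M$) is at least $\exp(\eta/(2R))$ --- \emph{exponential} in $\eta$. The exponential eventually dominates the polynomial $(1+\eta/R)^R$; the ratio $\exp(\eta/R)/(1+\eta/R)^R$ is minimized at $\eta=R^2-R$ and increases afterward, and the hypothesis $R=O(\log d/\log\log d)$ is precisely what guarantees that the $\eta$ forced by $|z_{n1}|\ge c\sqrt d/2$ already lies past this minimizer and makes the ratio $\mathrm{poly}(d)$. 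So the missing idea is not a finer dyadic grid of one-shot distractor norms but repeating a tiny-norm distractor many times to manufacture exponential-in-$\eta$ growth in an off-direction; without this, the case $E(\eta)>c\sqrt d$ in your analysis does not go through.
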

\begin{proof}
Our instance consists of the following vectors: 
\begin{enumerate}
    \item $R$ copies of the vector $(1/\sqrt{R})e_1$,
    \item 1 copy of the vector $(1/\sqrt{R-\varepsilon})e_2$, and
    \item $\alpha$ copies of the vector $(1/\sqrt{\alpha \cdot R})e_3$.
\end{enumerate}
where $\alpha = 2M$. Let $A$ be a matrix with rows given by the stream of vectors defined above. We note that the matrix $A$ has rank $3$ and the non-zero eigenvalues of the matrix $\T{A}A$ are $1, 1/(R-\varepsilon), 1/R$ and therefore the \emph{gap} $\lambda_1(\T{A}A)/\lambda_2(\T{A}A) = R - \varepsilon$. The top eigenvector of the matrix $\T{A}A$ is $e_1$ and the row with the largest norm is $(1/\sqrt{R-\varepsilon})e_2$. Thus, the row with the largest norm is not useful to obtain correlation with the true top eigenvector $e_1$.

Consider an execution of Oja's algorithm with a learning rate $\eta$ on the above stream of vectors. The final vector $z_n$ can be written as
\begin{align*}
    z_n = \left(I + \frac{\eta}{R}e_1\T{e_1}\right)^R\left(I + \frac{\eta}{R\alpha}e_3\T{e_3}\right)^{\alpha}\left(I + \frac{1}{R-\varepsilon}e_2\T{e_2}\right)v_0.
\end{align*}
For $j \in [d]$, let $z_{ij}$ denote the $j$-th coordinate of the vector $z_i$ so that we have
\begin{align*}
    z_{n1} &= \left(1 + \frac{\eta}{R}\right)^R \cdot z_{01},\\
    z_{n2} & = \left(1 + \frac{\eta}{R-\varepsilon}\right) \cdot z_{02},\quad \text{and}\\
    z_{n3} &= \left(1 + \frac{\eta}{R\alpha}\right)^\alpha \cdot z_{03}.
\end{align*}
We note that $z_{nj} = z_{0j}$ for all $j > 3$. Since $\alpha = 2M$, we have $\eta/R\alpha \le 1/2$ and therefore $(1 + \eta/R\alpha) \ge \exp(\eta/2R\alpha)$ and $(1+\eta/R\alpha)^{\alpha} \ge \exp(\eta/2R)$. 

Recall that we want to show that $|\la z_n, e_1\ra| < c\opnorm{z_n}$ with a large probability. Suppose otherwise and that with probability $\ge 1/10$, we have $|\la z_n, e_1\ra| > c\opnorm{z_n} > c\opnorm{(0,\ 0,\ 0,\ z_{04},\ \ldots,\ z_{0d})}$. 

Since, $z_0$ is initialized to be a random Gaussian, we have $\opnorm{(0, 0, 0, z_{04}, \ldots, z_{0d})} \ge \sqrt{d}/2$ with probability $1 - \exp(-d)$. Thus, we have with probability $\ge 1/11$ that,
\begin{align*}
    |z_{n1}| \ge c\sqrt{d}/2
\end{align*}
which implies the learning rate must satisfy
\begin{align*}
    (1 + \eta/R)^R \ge c'\sqrt{d}/2
\end{align*}
since $|z_{01}| \le 10$ with probability $\ge 99/100$. Hence $\eta \ge R((c'd^{1/2})^{1/R} - 1)$. Now consider $|\la z_n, e_3\ra|/|\la z_{n}, e_1 \ra|$. We have
\begin{align*}
    \frac{|\la z_n, e_3\ra|}{|\la z_n, e_1\ra|} = \frac{\exp(\eta/R)}{(1+\eta/R)^R} \cdot \frac{|z_{03}|}{|z_{01}|}.
\end{align*}
With probability $\ge 95/100$, we have $1/C \le |z_{03}|/|z_{01}| \le C$ for a large enough constant $C$. We now consider the expression
\begin{align*}
    \frac{\exp(\eta/R)}{(1+\eta/R)^R}.
\end{align*}
The expression is minimized at $\eta = R^2 - R$ and is increasing in the range $\eta \in [R^2 - R, \infty)$. When, $R = O(\log d/\log\log d)$, we have that $R^2 - R \le R((c'd^{1/2})^{1/R} - 1)$ and therefore for all $\eta \ge R((c'd^{1/2})^{1/R} - 1)$, we have
\begin{align*}
    \frac{\exp(\eta/R)}{(1+\eta/R)^R} \ge \frac{\exp((c'd^{1/2})^{1/R})}{e \cdot c'd^{1/2}}.
\end{align*}
When $R = O(\log d/\log\log d)$, we have
\begin{align*}
    \frac{\exp(\eta/R)}{(1+\eta/R)^R} \ge \poly(d)
\end{align*}
which then implies $|\la z_n, e_3\ra| \ge |\la z_n, e_1\ra| \cdot \poly(d)/C$ with probability $\ge 95/100$ which contradicts our assumption that $|\la z_n, e_1\ra| \ge c\opnorm{z_n}$.
\end{proof}
\section*{Acknowledgements}
The authors were supported in part by a Simons Investigator Award and NSF CCF-2335412. D. Woodruff was visiting Google Research while performing this work.

\bibliographystyle{plainnat}
\bibliography{main}

\end{document}